\newcommand\mydots{\hbox to 0.7em{.\hss.\hss.}}
\newtheorem{theorem}{{Theorem}}
\newtheorem{lemma}[theorem]{{Lemma}}
\newtheorem{proposition}[theorem]{{Proposition}}
\newtheorem{corollary}[theorem]{{Corollary}}
\newtheorem{claim}[theorem]{{Claim}}
\newtheorem{definition}{{Definition}}
\DeclareMathAlphabet{\mathbfsl}{OT1}{ppl}{b}{it} 
\newcommand{\bc}{\mathbfsl{c}} 
\newcommand{\be}{\mathbfsl{e}} 
\newcommand{\bh}{\mathbfsl{h}}
\newcommand{\bu}{\mathbfsl{u}} 
\newcommand{\bv}{\mathbfsl{v}}
\newcommand{\bx}{\mathbfsl{x}}
\newcommand{\by}{\mathbfsl{y}} 
\newcommand{\bz}{\mathbfsl{z}}
\newcommand{\bk}{\mathbfsl{k}} 
\newcommand{\bm}{\mathbfsl{m}} 
\DeclareMathOperator*{\argmin}{argmin}
\newcommand*{\argminl}{\argmin\limits}
\newcommand{\floor}[1]{\left\lfloor #1 \right\rfloor}
\newcommand{\beq}[1]{\begin{equation}\label{#1}}
	\newcommand{\ee}{\end{equation}} 
\newcommand{\eq}[1]{(\ref{#1})}
\renewcommand{\leq}{\leqslant}
\renewcommand{\geq}{\geqslant}
\renewcommand{\Bbb}{\mathbb}
\newcommand{\Tref}[1]{Theo\-rem\,\ref{#1}}
\newcommand{\Cref}[1]{Co\-ro\-lla\-ry\,\ref{#1}}
\newcommand{\Fq}{{{\Bbb F}}_{\!q}}
\newcommand{\deff}{\mbox{$\stackrel{\rm def}{=}$}}
\DeclareMathOperator{\rank}{rank}
\begin{document}
\title{Threshold-Secure Coding with Shared Key} 

\author{\large%
Nasser Aldaghri,\,\,\IEEEmembership{Student Member,~IEEE}, and
Hessam Mahdavifar,\,\,\IEEEmembership{Member,~IEEE}\\

\thanks{%
The material in this paper was presented in part at the 57th Annual Allerton Conference on Communication, Control, and Computing (Allerton) in September 2019 \cite{aldaghri2019threshold}. This work was supported in part by the National Science Foundation under Grant CCF–1763348 and Grant CCF–1909771.}
\thanks{N.\, Aldaghri and H.\ Mahdavifar are with the Department of Electrical Engineering and Computer Science, University of Michigan, Ann Arbor, MI 48109 (email: aldaghri@umich.edu and hessam@umich.edu).} 
}

\maketitle
\begin{abstract}
Cryptographic protocols are often implemented at upper layers of communication networks, while error-correcting codes are employed at the physical layer. In this paper, we consider utilizing readily-available physical layer functions, such as encoders and decoders, together with shared keys to provide a \textit{threshold-type} security scheme. To this end, we first consider a scenario where the effect of the physical layer is omitted and all the channels between the involved parties are assumed to be noiseless. We introduce a model for threshold-secure coding, where the legitimate parties communicate using a shared key such that an eavesdropper does not get any information, in an information-theoretic sense, about the key as well as about any subset of the input symbols of size up to a certain threshold. Then, a framework is provided for constructing threshold-secure codes from linear block codes while characterizing the requirements to satisfy the reliability and security conditions. Moreover, we propose a threshold-secure coding scheme, based on Reed-Muller (RM) codes, that meets security and reliability conditions. Furthermore, it is shown that the encoder and the decoder of the scheme can be implemented efficiently with quasi-linear time complexity. In particular, a successive cancellation decoder is shown for the RM-based coding scheme. Then we extend the setup to the scenario where the channel between the legitimate parties is no longer noiseless. The reliability condition for noisy channels is then modified accordingly, and a method is described to construct codes attaining threshold security as well as desired reliability, i.e., robustness against the channel noise. Moreover, we propose a coding scheme based on RM codes for threshold security and robustness designed for binary erasure channels along with a unified successive cancellation decoder. The proposed threshold-secure coding schemes are flexible and can be adapted for different key lengths.
\end{abstract}


\section{Introduction}
\label{section:one}

Conventional cryptosystems are often designed to be computationally secure by relying on unproven assumptions of hardness of mathematical problems. Information-theoretic security methods provide an alternative approach by constructing codes for keyless secure communication, as in wiretap channels introduced in a seminal work by Wyner \cite{wyner1975wire}. Since then, various types of wiretap channels have been considered in the literature \cite{gopala2007secrecy,oggier2011secrecy}, and with employing different coding schemes as in \cite{thangaraj2007applications,mahdavifar2011achieving}. 

Several approaches to provide security in the physical layer assuming shared secret keys have been considered in the literature. 
Such shared keys can be either fixed prior to communication as in classical cryptographic protocols or can be extracted from a source of common randomness \cite{maurer1993secret} such as characteristics of the physical layer channel, see, e.g., \cite{mathur2008radio, aldaghri2020physical, coupling1}. For instance, a variation of the wiretap channel model, where a shared secret key is assumed to be constantly generated by the legitimate parties, namely Alice and Bob, is studied in \cite{kang2010wiretap}. Another approach is to design an encryption scheme that utilizes properties of certain modulation schemes such as orthogonal frequency-division multiplexing (OFDM) to ensure security, see, e.g., \cite{zhang2014chaos,huo2014new,zhang2017design}. Other related works include using channel reciprocity properties \cite{tahir2010wireless}, classical stream ciphers at the physical layer \cite{zuquete2008physical}, introducing artificial noise \cite{zhou2009physical}, multiple-input and multiple-output (MIMO) systems \cite{goel2008guaranteeing}, public-key based McEliece cryptosystem \cite{mceliece1978public}, and using error-correcting codes for encryption \cite{mathur2006high,kim2014secure}. These prior works either consider noisy channels as in the wiretap channel model or utilize cryptographic primitives being evaluated using cryptographic measures rather than information-theoretical measures to establish security. 

Another related line of research is secure network coding, where a wiretapper has access to a certain number of edges in a network over which a source wishes to communicate messages securely. Several works have considered information-theoretic security measures while designing network codes, see, e.g., \cite{cai2002secure,bhattad2005weakly}. A similar line of work has appeared in the context of index coding, where multiple users have partial information about a set of messages and want to receive certain other messages from a central node. The eavesdropper in this scenario is then assumed to have access to a certain number of messages and a certain number of transmissions while the security of the entire message block is considered, see, e.g., \cite{dau2012security,ong2016secure}. Also, in the context of distributed storage, security guarantees are studied while having trusted storage nodes in untrusted networks. More specifically, scenarios are considered where an eavesdropper/untrusted node has access to a certain number of coded symbols and the goal is to ensure that it is not feasible to reconstruct any individual symbol of the message, e.g., the message intended for another node, see, e.g., \cite{oliveira2012coding,chung2015cyrus,soleymani2020distributed}. These prior works differ from the setting considered in this paper in two major aspects. Firstly, they are concerned with keyless techniques with information-theoretic guarantees, e.g., secret sharing, and secondly, the eavesdropper is often assumed to have access to partial information about the message/set of messages rather than the entire information block.

Utilizing error-correcting codes to provide security in the physical layer enables sharing hardware resources between reliability and security schemes in low-cost devices. Consequently, this leads to a promising approach for low-complexity applications, such as Internet-of-Things (IoT) networks. In this paper, we consider using block codes to provide a \textit{threshold-type} security scheme. A fixed key is assumed to be securely shared between the legitimate parties Alice and Bob a priori. First, we consider a scenario where the effect of the physical layer is abstracted out and all the channels between the involved parties are assumed to be noiseless. In other words, Alice communicates to Bob over a noiseless channel and her transmissions reach an eavesdropper, namely Eve, also through a noiseless channel, as shown in Figure\,\ref{fig:systemmodel}. The security condition in this model is described as follows. Alice encodes her message using the shared key while ensuring that Eve does not obtain any information about the key as well as about any subset of the input message symbols of size up to a certain threshold $t$. This condition is referred to as the $t$-threshold security condition. Then we consider the case where Alice and Bob share a noisy channel, while the eavesdropper Eve acquires Alice's transmission noise-free. The considered threshold-type security becomes relevant in applications where the knowledge of most, if not all, of the individual data symbols is needed in order to deduce meaningful knowledge about the content of the message. Examples of this type of data include measurement numbers, network commands, the index of elements in a dataset, randomly assigned identification numbers, as well as barcodes or data in any application where the data symbols are already scrambled, hashed, or masked prior to being encoded. A more detailed explanation of such applications is discussed in Section\,\ref{sec:app}. Furthermore, ensuring the security of the key in the model guarantees that it can be, theoretically, used infinitely many times without leaking any information about it or the messages to Eve.

 \begin{figure}
   \centering
   \includegraphics[trim=16.39pt 290.25pt 119.7pt 302.66pt, clip,width=3.4in]{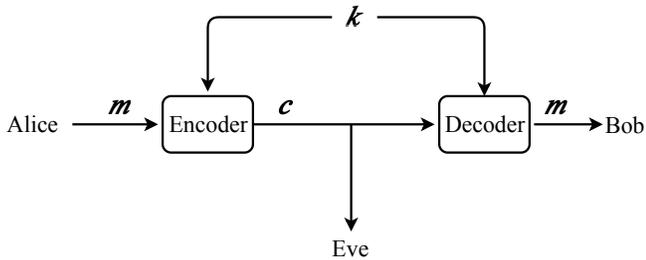}
   \caption{System setup for the proposed coding scheme.}
   \label{fig:systemmodel}
 \end{figure}
 
In the setups considered in this paper, we deviate from conventional physical-layer security settings by removing any condition on the channel from Alice to Eve; in fact, we assume this channel is noiseless. However, we still describe the schemes in a communication setting with the aim of integrating such schemes with channel coding in the physical layer. To this end, a general scheme for noiseless channels using linear block codes for the $t$-threshold-secure coding scheme is shown. Furthermore, we describe a specific construction based on RM codes \cite{reed} that meets the threshold security condition, and show an encoder and a decoder, with quasi-linear complexity, to reliably retrieve the message using the shared key. Moreover, we discuss a general method for constructing codes, closely related to concatenated codes \cite{forney1965concatenated}, for noisy channels that satisfy the threshold security requirements with respect to Eve and provide robust communication for Bob in the presence of channel noise. Also, we propose an explicit RM-based construction that is both $t$-threshold-secure and capable of correcting erasures, together with a unified successive cancellation decoder that corrects erasures and retrieves the message simultaneously given the shared key.

The rest of this paper is organized as follows. In Section \ref{sysmodel} we describe the setup and formulate the reliability and the security conditions for noiseless channels based on information-theoretic measures. The proposed coding scheme based on linear block codes is described in detail and its security and reliability are evaluated in Section \ref{coding}. Then, we describe an explicit coding scheme based on RM codes together with an encoder and a successive cancellation decoder in Section \ref{construction}. A general construction of threshold-secure codes for noisy channels together with an explicit low-complexity RM-based coding scheme for binary erasure channels (BEC) are discussed in Section \ref{robustness}. Finally, we conclude the paper in Section \ref{conclusion}, and discuss several directions for future work.

\section{System Model and Applications}\label{sysmodel}
In this section, we discuss the system model considered in this paper followed by extending certain applications of this model, as discussed in Section\,\ref{section:one}. 

\subsection{System Model}\label{sysmodel_1}
Consider a system model where Alice wishes to securely communicate with Bob, both are legitimate parties, through a noiseless channel. The eavesdropper, namely Eve, is tapping into that channel and observes all the transmitted symbols, as shown in Figure \ref{fig:systemmodel}. Alice and Bob share a common key sequence $\bk$ of length $k$, that can be used for encoding and decoding of message $\bm$ of length $m$. Both the key and the message symbols are from an alphabet of size $q$, where $q$ is a prime power. A certain known permutation $\pi(.)$ of Alice's message sequence $\bm$ together with the key sequence $\bk$ is fed as the input to the encoder, denoted by $\bu$, i.e., $\bu=\pi(\bk,\bm)$. The length of $\bu$ is $n=m+k$ and is encoded to a codeword $\bc$ of length $m$. The entries in $\bk$ as well as in $\bm$ are assumed to be independent and uniformly distributed. Alice then transmits the codeword $\bc$ to Bob over the noiseless channel. Bob receives the codeword and decodes it using the key $\bk$ to retrieve the message $\bm$. Eve observes $\bc$ and aims at extracting information about the message $\bm$ as well as the key $\bk$. In this setup, Alice and Bob agree on the encoder and the decoder a priori, which are also publicly known to Eve.

In this model, the security condition is the following. Although parts of input $\bu$ are disclosed to Eve, no knowledge, in an information-theoretic sense, about any subset of size up to a certain threshold parameter $t$ of the input symbols will be leaked to Eve. Note that this is different from the traditional measure of information-theoretic security where the mutual information between the entire message block and Eve's observation needs to be zero/almost zero. In a sense, we consider a \textit{sub-block-wise} measure of information-theoretic security. We aim at designing an encoder and a decoder for a noiseless channel that utilizes a shared key $\bk$ to encode a message $\bm$ such that the following conditions are met:

\begin{enumerate}

\item \textit{Reliability}: Bob is able to decode the message, knowing the key, with probability one, i.e.,
\begin{align}\label{reliability1}
    H(\bm|\bc,\bk)&=0.
\end{align}

\item \textit{Key security}: the codeword $\bc$ does not reveal any information about the key $\bk$, i.e.,
\begin{align}\label{security1}
    I(\bk;\bc) = 0.
\end{align}

\item \textit{$t$-threshold security}: for any $\bv \subseteq \{u_1,\!u_2,\!\mydots,u_n\}$ with $|\bv| \leq t$, we have
\begin{align}\label{security2}
    H(\bv | \bc )=H(\bv),
\end{align}

where $t$ is a design parameter specified later.

\end{enumerate}

\noindent \textbf{Remark 1.} Note that the secrecy capacity of the communication system in Figure 1, even with a relaxed security condition of $\lim_{m \to \infty}\frac{1}{m}I(\bm,\bc) = 0$, since $\bc$ is of length $m$, i.e., \textit{weak} security, is zero \cite{wyner1975wire}. In a related work \cite{kang2010wiretap}, a source of common randomness is required to generate a key with a certain rate $R_k$ to ensure non-zero secrecy capacity. However, here, a key of a fixed length is used repeatedly. In a sense, this implies that the key rate is zero as the message length grows large. 

A formal definition of a $t$-threshold secure code is defined next.

\begin{definition}
We say a code is $t$-threshold secure if it meets the reliability and security conditions, where $t$ is the maximum cardinality of any $\bv \subseteq  \{u_1,\!u_2,\!\mydots,u_n\}$ that satisfies \eqref{security2}.
\end{definition}

It is worth noting that the model considered in this paper subsumes a range of previously studied models, e.g., the perfectly-secure one-time-pad (OTP) encryption which is a code with threshold $t=m$ used once and hence, we have $H(\bm|\bc)=H(\bm)$. Another related line of work is on certain types of \textit{keyless} security schemes known as  unconditionally-secure all-or-nothing transforms (AONT) \cite{stinson2001something}. More specifically, cases are studied where the eavesdropper observes a vector $\bz$ whose elements are a subset of size $m-t$ of the set of elements of $\bc$, where $\bc$ is of length $m$ \cite{esfahani2017some}. The security condition is then translated to $H(\bv | \bz )=H(\bv)$ for all $\bv$ of size $t$ as in \cite{esfahani2017some}.

\subsection{Applications}\label{sec:app}
As briefly discussed in Section\,\ref{section:one}, the considered threshold-type security becomes relevant in applications where the entire message or significant portion of it is needed in order for an eavesdropper to obtain meaningful knowledge about the content of the message. In this section, we briefly expand on one of the applications for the described threshold security setup in Section\,\ref{sysmodel_1}. 

Consider an authentication system based on users' biometric information, such as fingerprints, e.g., as described in \cite{sutcu2005secure}, where the data is assumed to be hashed prior to encoding. Let us denote the fingerprint measurement vector as $\tilde{\bx}$. Also, let us have the following two functions: a feature extraction function $f(.)$ and a secure hash function $g(.)$. The function $f(.)$ is an arbitrary function that maps the input vector $\tilde{\bx}$ to another vector $\bx$. The hash function $g(.)$ is a mapping from an input space of size $a$ to a hash table of size $b$ with the following property:
\begin{align}
    \mathrm{Pr} (g(\bx_1) = g(\bx_2) | \bx_1 \neq \bx_2 ) = \frac{1}{b}, \label{hash_unifrom}
\end{align}
where $\bx_1$ and $\bx_2$ are any input vectors, and the resulting load factor of this hash function is $\beta=\frac{a}{b}$ \cite{cormen2009introduction}. In this example, when a user scans their fingerprint, the measurement vector $\tilde{\bx}$ is processed using $f(.)$ to produce the vector $\bx$ that is hashed using the hash function $g(.)$ to produce the hashed vector denoted as $\bm$, i.e.,
\begin{align}
    \bm=g(\bx)=g(f(\tilde{\bx})).
\end{align}
Then the hashed vector $\bm$ is the input to the threshold-secure encoder together with the key. This hashed vector is uniformly distributed by the assumption on the hash function $g(.)$ in \eqref{hash_unifrom}. The hashed vector is to be sent to a database that contains the hashed vectors of all authorized users for authentication. For an eavesdropper that aims to learn the vector $\bx$, knowledge of the entire $\bm$ is needed. Let us assume that the eavesdropper has access to the hash function $g(.)$. If $\bm$ is sent as is, the probability of successfully acquiring $\bx$ by the eavesdropper is $\frac{1}{\beta}$ since the eavesdropper can discard any vector that does not hash to the observed $\bm$. However, when using threshold-secure coding with threshold $t$, and assuming an alphabet of size $q$, this probability becomes at most $\frac{1}{\beta q^{t}}$ which is exponentially decaying with $t$. This is because the eavesdropper needs to retrieve the hashed vector $\bm$ first. Choosing an appropriate parameter $t$, e.g., in the order of a few tens, combined with the uniformity of the hash functions, is sufficient to cripple the eavesdropper in a practical setting.

\section{Coding Schemes}\label{coding}

With a slight abuse of terminology, we refer to a scheme meeting the reliability and security conditions, as described in Section\,\ref{sysmodel}, simply as a coding scheme. The coding scheme is revealed to all parties, i.e., Alice, Bob, and Eve. When constructing the coding scheme, we aim at designing an encoder and a decoder as well as specifying the code. For an input $\bu=\pi(\bk,\bm)$ the encoder produces a codeword $\bc$ as follows
\begin{align}\label{codeword0}
    \bc=\bu \textbf{W}=\pi(\bk,\bm)\textbf{W},
\end{align}
where $\textbf{W}$ is an $n\times m$ matrix with $n=m+k$. In this proposed scheme, we consider this matrix as the transpose of a generator matrix $\textbf{G}$ of a linear block code.

Consider a $[n,m,d_{\mathrm{min}}]_q$ linear block code with generator matrix $\textbf{G}$, i.e., a linear block code whose elements are from an alphabet of size $q$, and has rate $R=m/n$ and minimum distance $d_{\mathrm{min}}$. Note that in this setup no redundancy in the codeword is required since the channel is noiseless. We aim at utilizing the generator matrix $\textbf{G}$ of certain linear block codes to construct a matrix $\textbf{W}$ for our coding scheme such that the reliability and security conditions are met. 

One can assume that the length of the key is less than the length of the message; otherwise, if $k\geq m$, then the straightforward perfectly-secure one-time pad meets the conditions for $t=m$. To encode a message $\bm$, let us denote the set of indices of the rows of $\textbf{W}$ that correspond to the message symbols as $\mathcal{A} \subseteq [m+k] \,\deff\, \{1,2,\mydots,m+k\}$. Then the set of indices of the rows corresponding to the key symbols is $\mathcal{A}^c=[m+k]\setminus\mathcal{A}$. The matrix $\textbf{W}_{\mathcal{A}}$ denotes the submatrix of $\textbf{W}$ with rows indexed by $\mathcal{A}$, and the matrix $\textbf{W}_{\mathcal{A}^c}$ denotes the submatrix of $\textbf{W}$ with rows indexed by $\mathcal{A}^c$. The codeword $\bc$ is then expressed as follows:
\begin{align}\label{codeword1}
    \bc = \bm \textbf{W}_{\mathcal{A}} + \bk \textbf{W}_{\mathcal{A}^c}.
\end{align}
The choice of $\pi(.)$, which corresponds to the choice of $\mathcal{A}$ and $\mathcal{A}^c$, is critical in ensuring security and reliability conditions. Hence, we have the following definition.

\begin{definition} \label{def_proper}
A code, as described above, is called \textit{proper} if its matrix satisfies the following requirements: 
\begin{enumerate}
    \item The resulting submatrix $\textbf{W}_{\mathcal{A}}$ is full row rank, i.e.,  $\rank (\textbf{W}_{\mathcal{A}})=m$.
    \item The resulting submatrix $\textbf{W}_{\mathcal{A}^c}$ is also full row rank, i.e., $\rank (\textbf{W}_{\mathcal{A}^c})=k$.
\end{enumerate}
\end{definition}

One example of codes that are not \textit{proper} is the turbo code \cite{turbo} whose generator matrix can be written in the form $\textbf{G}=[\textbf{I}_m \hspace{0.05in} \textbf{A}_1 \hspace{0.05in} \textbf{A}_2]$ where $\textbf{I}_m$ is the identity matrix whose columns are dedicated to the message while the rest are dedicated to the key. Note that $\textbf{A}_2$ is some row-permuted version of $\textbf{A}_1$, and such a permutation may not necessarily result in $[\textbf{A}_1 \hspace{0.05in} \textbf{A}_2]^T$ being a full row-rank matrix. Hence, this code is not necessarily \textit{proper}. A code that is not \textit{proper} will result in a lower equivocation rate for Eve about the message, and leads to leakage of information about the key to Eve, as will be clarified throughout this section.

Next, we show that if a code is \textit{proper}, then it meets the reliability condition, as specified in \eq{reliability1}, and the security conditions, as specified in \eq{security1} and \eq{security2}. The following lemma shows that the reliability condition is satisfied.

\begin{lemma}\label{decodability}
Suppose that the code used in the coding scheme is \textit{proper}, as defined in Definition\,\ref{def_proper}. Then Bob can recover the message with probability one under maximum a posteriori (MAP) decoding. In other words,
\begin{align}
    H(\bm|\bc,\bk)=0.
\end{align}
\end{lemma}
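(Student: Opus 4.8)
The plan is to prove the statement by exhibiting an explicit linear decoder for Bob and showing it succeeds with probability one, which makes the conditional distribution of $\bm$ given $(\bc,\bk)$ a point mass and hence forces $H(\bm\mid\bc,\bk)=0$; since the posterior is a point mass, this same decoder is MAP-optimal.

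First I would start from the explicit codeword expression in \eqref{codeword1}, namely $\bc = \bm\,\textbf{W}_{\mathcal{A}} + \bk\,\textbf{W}_{\mathcal{A}^c}$. Bob knows the key $\bk$ and, since the encoder (hence $\textbf{W}$ and the partition $\mathcal{A},\mathcal{A}^c$) is public, he can subtract the key contribution and form $\bc - \bk\,\textbf{W}_{\mathcal{A}^c} = \bm\,\textbf{W}_{\mathcal{A}}$. This reduces Bob's task to inverting the linear map $\bm \mapsto \bm\,\textbf{W}_{\mathcal{A}}$ on $\Fq^{\,m}$.

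Next I would invoke the properness hypothesis. Since $|\mathcal{A}| = m$ and $\textbf{W}$ has $m$ columns, the submatrix $\textbf{W}_{\mathcal{A}}$ is $m\times m$, and Definition~\ref{def_proper} gives $\rank(\textbf{W}_{\mathcal{A}}) = m$, so $\textbf{W}_{\mathcal{A}}$ is invertible over $\Fq$. Hence Bob recovers $\bm = (\bc - \bk\,\textbf{W}_{\mathcal{A}^c})\,\textbf{W}_{\mathcal{A}}^{-1}$, i.e., $\bm$ is a deterministic function of $(\bc,\bk)$. A random variable that is a function of the conditioning variables has zero conditional entropy, so $H(\bm\mid\bc,\bk)=0$; and because the posterior $\Pr(\bm\mid\bc,\bk)$ is concentrated on this single value, MAP decoding returns the true message with probability one, as claimed.

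I do not expect any real obstacle: the argument is elementary linear algebra, and the only points needing a line of care are (i) that full row rank of the square matrix $\textbf{W}_{\mathcal{A}}$ is equivalent to invertibility, and (ii) the passage from "the decoder succeeds with probability one" to the information-theoretic equality $H(\bm\mid\bc,\bk)=0$. Note that the second properness condition, on $\textbf{W}_{\mathcal{A}^c}$, plays no role in reliability and will instead be needed for the key-security and $t$-threshold-security claims established later in this section.
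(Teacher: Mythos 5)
Your proposal is correct and follows essentially the same route as the paper's own proof: subtract $\bk\,\textbf{W}_{\mathcal{A}^c}$ from $\bc$ using the known key, then invert the full-rank $m\times m$ matrix $\textbf{W}_{\mathcal{A}}$ to recover $\bm$ uniquely. Your additional remarks on why a deterministic recovery forces $H(\bm\mid\bc,\bk)=0$ and makes the decoder MAP-optimal merely spell out what the paper leaves implicit.
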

\begin{proof}
By using \eqref{codeword1}, it can be observed that since Bob has $\bc$ and $\bk$ and since $\textbf{W}_{\mathcal{A}}$ is full rank, then Bob can subtract $\bk\textbf{W}_{\mathcal{A}^c}$ from $\bc$ and then find $\bm$ from $\textbf{W}_{\mathcal{A}}$, which has a unique solution.
\end{proof}

In the next theorem, we show that a \textit{proper} code meets the key security condition, as specified in \eq{security1}. Note that satisfying this condition is very critical as even a very small leakage of the key $\bk$ can lead to the entire key being revealed to Eve after using the scheme several times, thereby compromising the security of the message. 

\begin{theorem}\label{securitythm}
Suppose that the code used in the coding scheme is \textit{proper}, as defined in Definition\,\ref{def_proper}. Then the codeword $\bc$ leaks no information about the key $\bk$, i.e., 
\begin{align}
    I(\bk;\bc)=0.
\end{align}
\end{theorem}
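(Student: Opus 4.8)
The plan is to show $I(\bk;\bc)=0$ by establishing that, conditioned on any fixed value of $\bk$, the codeword $\bc$ is uniformly distributed over the same set, independent of the particular value of $\bk$. Concretely, I would start from the decomposition \eqref{codeword1}, namely $\bc = \bm\textbf{W}_{\mathcal{A}} + \bk\textbf{W}_{\mathcal{A}^c}$. Fix a key value $\bk=\bk_0$. Since $\bm$ is uniform over $\Fq^{\,m}$ and, because the code is \emph{proper}, $\textbf{W}_{\mathcal{A}}$ has full row rank $m$, the map $\bm \mapsto \bm\textbf{W}_{\mathcal{A}}$ is injective; hence $\bm\textbf{W}_{\mathcal{A}}$ is uniformly distributed over the row space $\mathcal{C}_{\mathcal{A}} \deff \{\bm\textbf{W}_{\mathcal{A}} : \bm \in \Fq^{\,m}\}$, a subspace of dimension $m$ in $\Fq^{\,m}$, i.e. all of $\Fq^{\,m}$. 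Adding the fixed vector $\bk_0\textbf{W}_{\mathcal{A}^c}$ is a bijection of $\Fq^{\,m}$ onto itself, so $\bc \mid \{\bk=\bk_0\}$ is uniform over $\Fq^{\,m}$, regardless of $\bk_0$.

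From here the conclusion is immediate: the conditional distribution $P_{\bc\mid\bk}(\cdot\mid\bk_0)$ is the uniform distribution on $\Fq^{\,m}$ for every $\bk_0$, so it does not depend on $\bk_0$; therefore $\bc$ is independent of $\bk$, which gives $I(\bk;\bc)=0$. Equivalently, in entropy terms, $H(\bc\mid\bk)=m\log_2 q = H(\bc)$, so $I(\bk;\bc)=H(\bc)-H(\bc\mid\bk)=0$.

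The only subtlety worth spelling out is the role of the \emph{proper} condition: we use that $\textbf{W}_{\mathcal{A}}$ is full row rank so that $\bm\textbf{W}_{\mathcal{A}}$ sweeps out the \emph{entire} ambient space $\Fq^{\,m}$ as $\bm$ varies, which is what makes the conditional law of $\bc$ the full uniform distribution and hence insensitive to the additive offset $\bk\textbf{W}_{\mathcal{A}^c}$. (Note condition~1 of Definition~\ref{def_proper} already forces $m \le m$ trivially, but the rank statement is the substantive hypothesis; condition~2 on $\textbf{W}_{\mathcal{A}^c}$ is not needed for key security and will instead be used for the $t$-threshold security claim.) I do not anticipate a genuine obstacle here; the main point is simply to phrase the uniformity argument cleanly — fixing $\bk$, invoking injectivity of $\bm\mapsto\bm\textbf{W}_{\mathcal{A}}$ together with uniformity of $\bm$, and then observing the shift-invariance — and then translating ``conditional law independent of the conditioning variable'' into $I(\bk;\bc)=0$.
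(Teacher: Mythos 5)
Your proposal is correct and takes essentially the same route as the paper: both arguments hinge on the full row rank of $\textbf{W}_{\mathcal{A}}$ (condition 1 of Definition\,\ref{def_proper}) together with the uniformity of $\bm$ to conclude that $\bc$ given $\bk$ is uniform over $\Fq^{\,m}$, hence independent of $\bk$; the paper phrases this as the entropy computation $I(\bk;\bc)=H(\bc)-H(\bm\textbf{W}_{\mathcal{A}})=\log_2(q)(m-\rank(\textbf{W}_{\mathcal{A}}))=0$, which is exactly your closing remark. Your side observation that condition 2 on $\textbf{W}_{\mathcal{A}^c}$ is not needed for key security is also accurate.
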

\begin{proof} The proof is by observing the following set of equalities:
\begin{align}
    I(\bk;\bc)&=H(\bc)-H(\bc|\bk),\\
    \label{eq11}
    &=m\log_2(q)-\!H(\bm\textbf{W}_{\mathcal{A}}+\bk\textbf{W}_{\mathcal{A}^c}|\bk),\\
    \label{eq12}
    &=m\log_2(q)-\!H(\bm \textbf{W}_{\mathcal{A}}),\\
    \label{eq13}
    &= \log_2(q) (m-\rank(\textbf{W}_{\mathcal{A}})),\\
    \label{eq14}
    &= 0,
\end{align}
where \eqref{eq11} holds by \eqref{codeword1} and the uniformity of the key and message symbols, hence the codewords are uniform, \eqref{eq12} holds because $\bm$ and $\bk$ are independent, \eqref{eq13} is by noting that elements of $\bm$ are uniformly distributed and independent, and \eqref{eq14} holds because $\rank (\textbf{W}_{\mathcal{A}})=m$ as the code is \textit{proper} according to Definition \ref{def_proper}.
\end{proof}

Additionally, to fully justify the reuse of $\bk$ for multiple encodings, we include the following corollary.

\begin{corollary}
    Suppose that the code used in the coding scheme is \textit{proper}, as defined in Definition\,\ref{def_proper}. Then the codewords $(\bc_1,\bc_2,\mydots,\bc_v)$ of the independent and uniform messages $(\bm_1,\bm_2,\mydots,\bm_v)$ leak no information about the key $\bk$, i.e., 
    \begin{align}
        I(\bk;\bc_1,\bc_2,\mydots,\bc_v)=0.
    \end{align}
\end{corollary}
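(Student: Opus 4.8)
The plan is to mimic the proof of Theorem~\ref{securitythm} but applied to the stacked (vectorized) collection of codewords. First I would note that each codeword satisfies $\bc_i = \bm_i \textbf{W}_{\mathcal{A}} + \bk \textbf{W}_{\mathcal{A}^c}$, so the concatenation $(\bc_1,\bc_2,\mydots,\bc_v)$ is a vector of length $vm$ over $\Fq$. Writing $\bm^{(v)} \deff (\bm_1,\bm_2,\mydots,\bm_v)$ and using the block-diagonal matrix $\textbf{W}_{\mathcal{A}}^{(v)} \deff \operatorname{diag}(\textbf{W}_{\mathcal{A}},\mydots,\textbf{W}_{\mathcal{A}})$ (the $v$-fold Kronecker product $\textbf{I}_v \otimes \textbf{W}_{\mathcal{A}}$), together with the ``tall'' matrix $\textbf{W}_{\mathcal{A}^c}^{(v)} \deff \textbf{1}_v \otimes \textbf{W}_{\mathcal{A}^c}$ that repeats $\textbf{W}_{\mathcal{A}^c}$ in each block, the whole collection becomes $(\bc_1,\mydots,\bc_v) = \bm^{(v)} \textbf{W}_{\mathcal{A}}^{(v)} + \bk\,\textbf{W}_{\mathcal{A}^c}^{(v)}$. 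The key observation is that $\textbf{W}_{\mathcal{A}}^{(v)}$ is full row rank (rank $vm$) precisely because $\textbf{W}_{\mathcal{A}}$ is, since a block-diagonal matrix is full row rank iff each block is.

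Next I would run the same chain of equalities as in \eqref{eq11}--\eqref{eq14}. Since the $\bm_i$ and $\bk$ are all independent and uniform, the concatenated codeword is uniform on $\Fq^{vm}$, so $H(\bc_1,\mydots,\bc_v) = vm \log_2 q$. Conditioning on $\bk$ removes the $\bk\,\textbf{W}_{\mathcal{A}^c}^{(v)}$ term (a constant given $\bk$), so $H(\bc_1,\mydots,\bc_v \mid \bk) = H(\bm^{(v)} \textbf{W}_{\mathcal{A}}^{(v)})$; by independence of $\bm^{(v)}$ from $\bk$ this equals $H(\bm^{(v)}\textbf{W}_{\mathcal{A}}^{(v)})$, and since $\bm^{(v)}$ is uniform on $\Fq^{vm}$ and $\textbf{W}_{\mathcal{A}}^{(v)}$ has rank $vm$, the image $\bm^{(v)}\textbf{W}_{\mathcal{A}}^{(v)}$ is uniform on a $vm$-dimensional subspace, so $H(\bm^{(v)}\textbf{W}_{\mathcal{A}}^{(v)}) = vm \log_2 q$. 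Therefore $I(\bk; \bc_1,\mydots,\bc_v) = vm\log_2 q - vm\log_2 q = 0$.

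Alternatively — and perhaps more cleanly — I could avoid the explicit Kronecker bookkeeping and argue inductively: $I(\bk;\bc_1,\mydots,\bc_v) = I(\bk;\bc_1) + I(\bk;\bc_2,\mydots,\bc_v \mid \bc_1)$, where the first term is zero by Theorem~\ref{securitythm}. For the conditional term, note that given $\bc_1$ the residual uncertainty in $\bk$ is still full (again by Theorem~\ref{securitythm}, $\bk$ is independent of $\bc_1$), and $\bc_2,\mydots,\bc_v$ depend on fresh independent messages $\bm_2,\mydots,\bm_v$; a short computation shows each $\bc_i$ for $i\geq 2$ remains individually uniform and independent of $(\bk,\bc_1)$ by the same rank argument, so the conditional mutual information vanishes, and induction closes the loop.

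I do not expect a genuine obstacle here — the result is essentially Theorem~\ref{securitythm} ``tensored up.'' The one point that warrants care is verifying that $\bc_1,\mydots,\bc_v$ are \emph{jointly} uniform (not merely marginally so) so that $H(\bc_1,\mydots,\bc_v) = vm\log_2 q$; this follows because the map $(\bm^{(v)},\bk) \mapsto (\bm^{(v)}\textbf{W}_{\mathcal{A}}^{(v)} + \bk\textbf{W}_{\mathcal{A}^c}^{(v)})$ is surjective onto $\Fq^{vm}$ (already $\textbf{W}_{\mathcal{A}}^{(v)}$ alone is surjective since it has rank $vm$), and the image of a uniform distribution under a surjective linear map whose fibers all have equal size is uniform. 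Everything else is a direct transcription of the proof of Theorem~\ref{securitythm}.
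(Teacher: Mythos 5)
Your proposal is correct and follows essentially the same route as the paper: the paper likewise writes $I(\bk;\bc_1,\mydots,\bc_v)=H(\bc_1,\mydots,\bc_v)-H(\bc_1,\mydots,\bc_v\mid\bk)=vm\log_2(q)-H(\bm_1\textbf{W}_{\mathcal{A}},\mydots,\bm_v\textbf{W}_{\mathcal{A}})=0$, invoking joint uniformity of the codewords, independence of the messages from the key, and $\rank(\textbf{W}_{\mathcal{A}})=m$. Your Kronecker-product bookkeeping and the equal-fiber justification of joint uniformity are just a more explicit rendering of the same steps, so no further comparison is needed.
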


\begin{proof} The proof is by observing the following set of equalities:
\begin{align}
    I(\bk;\bc_1,\bc_2,\mydots,\bc_v)&=H(\bc_1,\bc_2,\mydots,\bc_v)\nonumber\\
    &\hspace{0.085 in}-H(\bc_1,\bc_2,\mydots,\bc_v|\bk),\\
    \label{eq51}
    &=vm\log_2(q)\nonumber\\
    &\hspace{0.085 in}-H(\bm_1\textbf{W}_{\mathcal{A}},\bm_2\textbf{W}_{\mathcal{A}},\mydots,\bm_v\textbf{W}_{\mathcal{A}}),\\
    \label{eq52}
    &=vm\log_2(q)-vH(\bm_i \textbf{W}_{\mathcal{A}}),\\
    \label{eq53}
    &=v \log_2(q) (m-\rank(\textbf{W}_{\mathcal{A}})),\\
    \label{eq54}
    &= 0,
\end{align}
where \eqref{eq51} holds by \eqref{codeword1}, the uniformity of codewords, and the independence of the key and messages, \eqref{eq52} holds by independence and uniformity of messages $(\bm_1,\bm_2,\mydots,\bm_v)$, where $\bm_i$ is uniformly distributed, \eqref{eq53} is by noting that elements of message $\bm_i$ are uniformly distributed and independent, and \eqref{eq54} holds because $\rank (\textbf{W}_{\mathcal{A}})=m$ as the code is \textit{proper} as in Definition \ref{def_proper}.
\end{proof}

The following lemma is well-known. However, it is included here as it is instrumental in characterizing the threshold security of coding schemes based on linear block codes. 

\begin{lemma} \cite{roth2006introduction}\label{min_dis}
For an $[n,m,d_{\mathrm{min}}]_q$ linear block code with generator matrix $\textbf{G}$, any submatrix of $\textbf{G}$ of size $m \times (n-|\mathcal{D}|)$ obtained by deleting columns indexed by elements of $\mathcal{D}$, where $\mathcal{D} \subseteq [n]$ with $|\mathcal{D}|=d_{\mathrm{min}}-1$, has full row rank, i.e.,
\begin{align}
    \rank(\textbf{G}_{\mathcal{D}^c})=m.
\end{align}
\end{lemma}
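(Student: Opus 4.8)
The plan is to prove the contrapositive, leveraging the defining property of the minimum distance $d_{\mathrm{min}}$ as the smallest weight of a nonzero codeword. Suppose, for contradiction, that for some $\mathcal{D} \subseteq [n]$ with $|\mathcal{D}| = d_{\mathrm{min}} - 1$ the submatrix $\textbf{G}_{\mathcal{D}^c}$ does \emph{not} have full row rank. Since $\textbf{G}_{\mathcal{D}^c}$ is an $m \times (n - d_{\mathrm{min}} + 1)$ matrix, this means its rows are linearly dependent over $\Fq$, so there exists a nonzero vector $\ba \in \Fq^m$ with $\ba \textbf{G}_{\mathcal{D}^c} = \zero$.

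Next I would lift this dependency back to the full generator matrix. Consider the codeword $\bc = \ba \textbf{G}$. Because $\textbf{G}$ itself has full row rank $m$ (it is a generator matrix of an $[n,m]_q$ code) and $\ba \neq \zero$, we have $\bc \neq \zero$. On the other hand, the coordinates of $\bc$ indexed by $\mathcal{D}^c$ are exactly the entries of $\ba \textbf{G}_{\mathcal{D}^c} = \zero$, so $\bc$ is supported entirely within $\mathcal{D}$. Hence $\wt(\bc) \le |\mathcal{D}| = d_{\mathrm{min}} - 1$.

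This produces a nonzero codeword of weight at most $d_{\mathrm{min}} - 1$, contradicting the definition of the minimum distance of the code. Therefore no such $\mathcal{D}$ exists, and every $m \times (n - d_{\mathrm{min}} + 1)$ submatrix of $\textbf{G}$ obtained by deleting $d_{\mathrm{min}} - 1$ columns has full row rank, i.e., $\rank(\textbf{G}_{\mathcal{D}^c}) = m$.

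There is no real obstacle here; the statement is classical and the only thing to be careful about is bookkeeping the indices correctly (distinguishing the deleted set $\mathcal{D}$ from its complement $\mathcal{D}^c$) and invoking that a generator matrix has full row rank by definition so that $\ba \mapsto \ba \textbf{G}$ is injective. Since the excerpt already attributes this to \cite{roth2006introduction}, the proof can be kept to the few lines above, or even cited directly without reproving it.
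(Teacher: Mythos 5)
Your proof is correct and is the standard argument: a row dependency in $\textbf{G}_{\mathcal{D}^c}$ lifts to a nonzero codeword supported inside $\mathcal{D}$, hence of weight at most $d_{\mathrm{min}}-1$, contradicting the definition of minimum distance. The paper itself gives no proof of this lemma --- it states it as well known and cites \cite{roth2006introduction} --- so there is nothing to diverge from; your two-paragraph contrapositive argument is exactly what the cited reference establishes and could be included or omitted as you note.
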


In the next theorem, we characterize the threshold security of coding schemes based on linear block codes.

\begin{theorem}\label{sub_block_sec}
A coding scheme constructed by a matrix $\textbf{W} = \textbf{G}^{\text{T}}$, where $\textbf{G}$ is the generator matrix of an $[n,m,d_{\mathrm{min}}]_q$ linear block code, is $t$-threshold secure, where $t=d_{\mathrm{min}}-1$, i.e., we have
\begin{align}
    H(\bv|\bc)=H(\bv),
\end{align}
for any $\bv \subseteq \{u_1,u_2,\mydots,u_n\}$ with $|\bv|=t$, and $t$ is the maximum value for which this condition holds.
\end{theorem}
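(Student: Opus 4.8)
The plan is to reduce the information-theoretic statement to a rank condition on column-submatrices of $\textbf{G}$ and then invoke \Lref{min_dis}, in close analogy with the proof of \Tref{securitythm}. Fix an index set $\cS \subseteq [n]$ with $|\cS| = t$, let $\cS^c = [n] \setminus \cS$, and write $\bv = (u_i : i \in \cS)$. First I would record the routine uniformity facts: since the entries of $\bk$ and $\bm$ are independent and uniform, $\bu = \pi(\bk,\bm)$ is uniform over $\mathbb{F}_q^n$, so $\bv$ is uniform over $\mathbb{F}_q^t$ with $H(\bv) = t\log_2(q)$, and $\bc = \bu\textbf{G}^{\text{T}}$ is uniform over $\mathbb{F}_q^m$ (because $\textbf{G}^{\text{T}}$ has full column rank $m$) with $H(\bc) = m\log_2(q)$.

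The core step is to evaluate $H(\bc|\bv)$. Writing $\bc = \sum_i u_i \bg_i$, where $\bg_i$ is the $i$-th column of $\textbf{G}$, conditioning on $\bv$ fixes the partial sum over $i \in \cS$ while the coordinates $(u_i : i \in \cS^c)$ remain uniform; hence $H(\bc|\bv) = \rank(\textbf{G}_{\cS^c})\log_2(q)$, where $\textbf{G}_{\cS^c}$ is the $m \times (n-t)$ submatrix of $\textbf{G}$ on the columns outside $\cS$. When $t = d_{\mathrm{min}} - 1$, this submatrix arises from $\textbf{G}$ by deleting $d_{\mathrm{min}} - 1$ columns, so \Lref{min_dis} gives $\rank(\textbf{G}_{\cS^c}) = m$, i.e.\ $H(\bc|\bv) = H(\bc)$; therefore $I(\bv;\bc) = 0$ and $H(\bv|\bc) = H(\bv)$. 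Since $\cS$ was an arbitrary set of size $t$, and any set of size at most $t$ sits inside one of size exactly $t$ (so independence from $\bc$ propagates downward), this yields $t$-threshold security with $t = d_{\mathrm{min}} - 1$, establishing \eqref{security2}.

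For the maximality claim I would exhibit one subset of size $d_{\mathrm{min}}$ for which \eqref{security2} fails. Let $\bc_0 = \bw_0 \textbf{G}$ be a minimum-weight codeword, so $\wt(\bc_0) = d_{\mathrm{min}}$ and $\bw_0 \neq \zero$, and set $\cS_0 = \supp(\bc_0)$, which has size $d_{\mathrm{min}}$. Then $\bw_0 \textbf{G}_{\cS_0^c} = (\bc_0)_{\cS_0^c} = \zero$ with $\bw_0 \neq \zero$ forces $\rank(\textbf{G}_{\cS_0^c}) \leq m - 1$, so the identity above gives $H(\bc|\bu_{\cS_0}) \leq (m-1)\log_2(q) < H(\bc)$, whence $H(\bu_{\cS_0}|\bc) < H(\bu_{\cS_0})$; thus the condition fails for $\bv = (u_i : i \in \cS_0)$, and $d_{\mathrm{min}} - 1$ is the largest admissible threshold. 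I expect the only step with any real content to be the identity $H(\bc|\bv) = \rank(\textbf{G}_{\cS^c})\log_2(q)$ — which hinges on the uniformity of $\bu$ surviving the permutation $\pi$ together with the conditioning leaving $(u_i : i \in \cS^c)$ uniform and independent of $\bv$ — while everything else is bookkeeping around \Lref{min_dis} and the choice of the support of a minimum-weight codeword for the converse.
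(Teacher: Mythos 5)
Your proposal is correct and follows essentially the same route as the paper: reduce $H(\bc\mid\bv)$ to $\log_2(q)\cdot\rank(\textbf{G}_{\cS^c})$ and invoke \Lref{min_dis} for the direct part, then use the support of a minimum-weight codeword for the converse. The only cosmetic difference is that the paper phrases the converse as $u_{i_{t+1}}$ being determined by $\bc$ and the other coordinates in the support (via the explicit linear combination $\sum_i\lambda_i c_i=\sum_{j\in\cF}\gamma_j u_j$), whereas you phrase it as a rank drop of $\textbf{G}_{\cS_0^c}$; these are equivalent statements of the same fact.
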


\begin{proof}
Let $\bu$ denote the input to the encoder for the coding scheme, as specified in \eq{codeword0}. Suppose that $\bv$ consists of elements of $\bu$ indexed by $\mathcal{B}=\{i_1,i_2,\mydots,i_t\} \subseteq [n]$, and $\tilde{\bu}$ consists of elements of $\bu$ indexed by $\mathcal{B}^c=[n]\setminus\mathcal{B}$. Then we have the following:
\begin{align}
    I(\bv ; \bc)&=H(\bc)-H(\bc|\bv),\label{eq32}\\
    &=m\log_2(q)-H(\tilde{\bu}\textbf{W}_{\mathcal{B}^c}+\bv \textbf{W}_{\mathcal{B}}|\bv),\label{eq33}\\
    &=m \log_2(q) -H(\tilde{\bu}\textbf{W}_{\mathcal{B}^c}),\label{eq34}\\
    &=\log_2(q)(m-\rank(\textbf{W}_{\mathcal{B}^c})),\label{eq35}\\
    &=0,\label{eq36}
\end{align}
where \eqref{eq33} follows due to codewords being uniformly distributed and expansion of random variables, \eqref{eq34} holds by the independence of $\bv$ and $\tilde{\bu}$, \eqref{eq35} holds due to the uniformity of $\tilde{\bu}$, and \eqref{eq36} holds by Lemma \ref{min_dis} with $t=d_{\mathrm{min}}-1$. Since the mutual information $I(\bv ; \bc)$ is zero, it implies that the $t$-threshold security criteria is met for the parameter $t=d_{\mathrm{min}}-1$, i.e.,
\begin{align}
    H(\bv|\bc)=H(\bv),
\end{align}
for any $\bv$ with $|\bv|=t$, where $t=d_{\mathrm{min}}-1$.

Next, we need to show that $t=d_{\mathrm{min}}-1$ is the maximum value for which the threshold security condition holds. Consider a codeword in the codebook generated by $\textbf{G}$ that has the Hamming weight equal to $t+1 = d_{\mathrm{min}}$ with non-zero elements at indices denoted by $\mathcal{F}=\{i_1,i_2,\mydots,i_{t+1}\}$. Then we have the following:
\begin{align}
    H(u_{i_1},\mydots,u_{i_{t+1}}|\bc)&= H(u_{i_1},\mydots,u_{i_{t}}|\bc)\nonumber\\
    &\hspace{0.085 in}+H(u_{i_{t+1}}|\bc, u_{i_1},\mydots,u_{i_{t}}),\label{eq41}\\
    &=H(u_{i_1},\mydots,u_{i_{t}}|\bc),\label{eq42}\\
    &\neq H(u_{i_1},\mydots,u_{i_{t+1}}),\label{eq43}
\end{align}
where \eqref{eq41} follows from the chain rule of entropy, and \eqref{eq42} holds because there exists a linear combination of the entries of $\bc=(c_1,c_2,\mydots,c_m)$ such that $\sum_{i=1}^{m} \lambda_i c_i = \sum_{j \in \mathcal{F}} \gamma_j u_j$. Hence, the second term becomes zero, since $u_{i_{t+1}}$ is uniquely determined given $\bc$ and $\{u_{i_1},\mydots,u_{i_{t}}\}$. Therefore, due to \eqref{eq43}, the threshold security condition does not hold for $t+1 = d_{\mathrm{min}}$.
\end{proof}

\begin{corollary}
\label{Sing}
For any $t$-threshold secure coding scheme, constructed from a linear block code, with message length $m$, key length $k$, and code length $n = m+k$, we have $t \leq k$.  
\end{corollary}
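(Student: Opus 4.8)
The plan is to transfer a classical bound on the minimum distance of the underlying linear block code through the exact characterization already proved in Theorem \ref{sub_block_sec}. By that theorem, a coding scheme built from the transpose of the generator matrix $\textbf{G}$ of an $[n,m,d_{\mathrm{min}}]_q$ code has security parameter pinned down precisely as $t = d_{\mathrm{min}}-1$. Hence it suffices to upper bound $d_{\mathrm{min}}$.

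First I would invoke the Singleton bound: for any $[n,m,d_{\mathrm{min}}]_q$ linear code one has $d_{\mathrm{min}} \leq n-m+1$. Substituting $n = m+k$ gives $d_{\mathrm{min}} \leq k+1$, and therefore $t = d_{\mathrm{min}}-1 \leq k$, which is the claim. Nothing beyond this substitution is required once Theorem \ref{sub_block_sec} is in hand. Alternatively, for a self-contained derivation that avoids citing the Singleton bound, I would argue directly from the security condition, reusing the decomposition from the proof of Theorem \ref{sub_block_sec}: writing $\bc = \tilde{\bu}\textbf{W}_{\mathcal{B}^c} + \bv\textbf{W}_{\mathcal{B}}$ with $\bv$ any set of $t$ input coordinates and $\tilde{\bu}$ the remaining $n-t$ coordinates, the requirement $I(\bv;\bc)=0$ together with $H(\bc)=m\log_2(q)$ forces $H(\tilde{\bu}\textbf{W}_{\mathcal{B}^c}) = m\log_2(q)$; but $\tilde{\bu}\textbf{W}_{\mathcal{B}^c}$ is a deterministic function of only $n-t$ symbols over an alphabet of size $q$, so its entropy is at most $(n-t)\log_2(q)$. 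Combining these yields $m \leq n-t$, i.e., $t \leq n-m = k$.

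There is no real obstacle here; the only point requiring a moment's care is that Theorem \ref{sub_block_sec} gives the \emph{exact} value $t = d_{\mathrm{min}}-1$, not merely $t \geq d_{\mathrm{min}}-1$, so that an upper bound on $d_{\mathrm{min}}$ indeed translates into an upper bound on $t$. In the direct argument, the analogous subtlety is that $H(\bc)=m\log_2(q)$ holds because $\textbf{G}$, being a generator matrix of an $[n,m,\cdot]_q$ code, has full row rank $m$, so $\bc$ is uniform over $\Fq^{m}$.
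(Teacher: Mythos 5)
Your primary argument is exactly the paper's proof: Theorem \ref{sub_block_sec} pins down $t = d_{\mathrm{min}}-1$, and the Singleton bound $d_{\mathrm{min}} \leq n-m+1 = k+1$ then gives $t \leq k$. The alternative entropy-counting derivation you sketch is also correct and self-contained, but the route the paper takes is your first one, so nothing further is needed.
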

\begin{proof}
The proof follows by \Tref{sub_block_sec} together with Singleton bound on the minimum distance of a code. 
\end{proof}

Next, we characterize Eve's equivocation about the entire message $\bm$ after observing the codeword.
\begin{corollary}\label{evesequivocation}
If the code is \textit{proper}, then Eve's equivocation about the entire encoded message $\bm$ after observing the codeword is equal to the entropy of the key, i.e.,
\begin{align}
    H(\bm|\bc)=k\log_2(q).
\end{align}
\end{corollary}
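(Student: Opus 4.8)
The plan is to compute $H(\bm|\bc)$ via the chain rule for entropy applied to the pair $(\bm,\bc)$, reducing everything to quantities already established in the proofs of \Tref{securitythm} and \Lref{decodability}. Writing the joint entropy two ways gives $H(\bc)+H(\bm|\bc)=H(\bm,\bc)=H(\bm)+H(\bc|\bm)$, so it suffices to evaluate the three entropies $H(\bc)$, $H(\bm)$, and $H(\bc|\bm)$ and rearrange.

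First I would record that $H(\bm)=m\log_2 q$ and $H(\bc)=m\log_2 q$: the former because the message symbols are independent and uniform over the alphabet of size $q$, and the latter because, as noted in the proof of \Tref{securitythm}, the uniformity and independence of $\bk$ and $\bm$ together with \eqref{codeword1} force $\bc$ to be uniform over $\Fq^m$. Next I would handle $H(\bc|\bm)$. From \eqref{codeword1} we have $\bc=\bm\textbf{W}_{\mathcal{A}}+\bk\textbf{W}_{\mathcal{A}^c}$; conditioning on $\bm$, the term $\bm\textbf{W}_{\mathcal{A}}$ is a fixed shift, and since $\bk$ is independent of $\bm$ this gives $H(\bc|\bm)=H(\bk\textbf{W}_{\mathcal{A}^c})$. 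Because the code is \emph{proper}, $\textbf{W}_{\mathcal{A}^c}$ has full row rank $k$ (the second condition of Definition \ref{def_proper}), so the map $\bk\mapsto\bk\textbf{W}_{\mathcal{A}^c}$ is injective on $\Fq^k$; with $\bk$ uniform this yields $H(\bk\textbf{W}_{\mathcal{A}^c})=k\log_2 q$.

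Substituting into the chain-rule identity gives $H(\bm|\bc)=H(\bm)+H(\bc|\bm)-H(\bc)=m\log_2 q+k\log_2 q-m\log_2 q=k\log_2 q$, as claimed. Equivalently, one can phrase the same computation as $I(\bm;\bc)=H(\bc)-H(\bc|\bm)=(m-k)\log_2 q$ and then use $H(\bm|\bc)=H(\bm)-I(\bm;\bc)$.

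There is no genuine obstacle here — the statement is essentially a bookkeeping consequence of the two rank conditions in Definition \ref{def_proper}, with the only leakage being through the $k$-dimensional key image. The one point worth flagging is that the formula tacitly relies on the standing assumption $k\le m$ (noted just before Definition \ref{def_proper}); otherwise $k\log_2 q$ would exceed $H(\bm)=m\log_2 q$, which signals that when $k\ge m$ one should instead use the one-time-pad regime rather than this construction.
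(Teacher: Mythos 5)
Your proposal is correct and follows essentially the same route as the paper: the paper's proof starts from the identity $H(\bm|\bc)=H(\bm)-H(\bc)+H(\bc|\bm)$ (your chain-rule rearrangement), cancels $H(\bm)=H(\bc)=m\log_2 q$ using uniformity of messages and codewords, and reduces $H(\bc|\bm)$ to $H(\bk\textbf{W}_{\mathcal{A}^c})=k\log_2 q$ via independence of $\bm$ and $\bk$ and the full row rank of $\textbf{W}_{\mathcal{A}^c}$. Your added remark about the standing assumption $k\le m$ is a reasonable sanity check but is not needed for the argument itself.
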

\begin{proof}
We have the following
\begin{align}
    H(\bm|\bc)&=H(\bm)-H(\bc)+H(\bc|\bm),\label{eq21}\\
    &=\!H(\bk \textbf{W}_{\mathcal{A}^c} + \bm \textbf{W}_{\mathcal{A}}|\bm),\label{eq23}\\
   &=H(\bk\textbf{W}_{\mathcal{A}^c}),\!\label{eq24}\\
    &=k\log_2(q), \label{eq25}
\end{align}
where 
\eqref{eq23} follows due to the uniformity of messages and codewords, and expansion of random vectors, \eqref{eq24} holds because of the independence of $\bm$ and $\bk$, and \eqref{eq25} holds by noting that the matrix $\textbf{W}_{\mathcal{A}^c}$ is full row rank since the code is \textit{proper}.
\end{proof}

The statement of \Cref{evesequivocation} can be also rephrased by stating that the probability of successfully retrieving the entire message block by Eve is equal to $q^{-k}$.

Now that we have established the properties that the coding schemes based on linear block codes satisfy, we need to show how to maximize the threshold $t$ as stated in \Cref{Sing}, provided that $q$ is large enough. To this end, we utilize maximum distance separable (MDS) codes to arrive at the following theorem. 

\begin{theorem}\label{exitence}
For any message length $m$ and key length $k$, there exists a \textit{proper} code with threshold $t = k$, provided that the alphabet size $q \geq m+k+1$. 
\end{theorem}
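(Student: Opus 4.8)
The plan is to exhibit an MDS code of the right length and dimension and verify that it is \emph{proper}. Take a $[n,m,d_{\min}]_q$ MDS code with $n = m+k$, so that $d_{\min} = n - m + 1 = k+1$; such codes exist whenever $q \ge n - 1 = m+k+1$ (e.g., Reed--Solomon codes, or extended Reed--Solomon codes for the slightly larger lengths). Let $\mathbf{G}$ be its generator matrix and set $\mathbf{W} = \mathbf{G}^{\mathrm{T}}$. By \Tref{sub_block_sec}, the resulting coding scheme is $t$-threshold secure with $t = d_{\min} - 1 = k$, which already matches the upper bound of \Cref{Sing}. So the only thing left to check is that some choice of the index set $\mathcal{A}$ (equivalently, of the permutation $\pi$) makes the code \emph{proper} in the sense of Definition\,\ref{def_proper}, i.e., both $\mathbf{W}_{\mathcal{A}}$ ($m$ rows) and $\mathbf{W}_{\mathcal{A}^c}$ ($k$ rows) have full row rank.

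The key observation is that for an MDS code, \emph{every} size-$m$ subset of the $n$ columns of $\mathbf{G}$ is linearly independent: this is exactly Lemma\,\ref{min_dis} applied with $|\mathcal{D}| = n - m = k = d_{\min}-1$, which gives $\rank(\mathbf{G}_{\mathcal{D}^c}) = m$ for every such $\mathcal{D}$. Transposing, every set of $m$ rows of $\mathbf{W}$ is linearly independent, so $\rank(\mathbf{W}_{\mathcal{A}}) = m$ holds for \emph{any} choice of $\mathcal{A}$ with $|\mathcal{A}| = m$; the first requirement of properness is automatic. For the second requirement I need $\rank(\mathbf{W}_{\mathcal{A}^c}) = k$, i.e., the remaining $k$ rows of $\mathbf{W}$ (the $k$ columns of $\mathbf{G}$ indexed by $\mathcal{A}^c$) are linearly independent. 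Since $k \le m$ (we may assume $k < m$, as the case $k \ge m$ is the trivial one-time pad noted before Lemma\,\ref{decodability}), any $k$ columns of $\mathbf{G}$ extend to a set of $m$ columns, which is independent by the MDS property; hence \emph{any} $k$ columns of $\mathbf{G}$ are independent as well. Therefore $\rank(\mathbf{W}_{\mathcal{A}^c}) = k$ for every choice, and the code is proper regardless of $\pi$.

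Putting the pieces together: choose $q \ge m+k+1$, take a Reed--Solomon (or doubly-extended Reed--Solomon) $[m+k, m]_q$ code, which is MDS with $d_{\min} = k+1$; set $\mathbf{W} = \mathbf{G}^{\mathrm{T}}$ and pick any $\mathcal{A}$. By the argument above the code is proper, so by \Lref{decodability}, \Tref{securitythm}, and \Tref{sub_block_sec} it satisfies reliability, key security, and $t$-threshold security with $t = k$. Combined with \Cref{Sing} this $t$ is optimal, proving the claim.

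I do not expect a serious obstacle here — the whole proof rests on the single structural fact that every $m$-subset (hence every $k$-subset, $k\le m$) of columns of an MDS generator matrix is independent, which is already available as Lemma\,\ref{min_dis}. The only points requiring a little care are (i) stating the existence threshold for MDS codes correctly — ordinary Reed--Solomon codes give length $q$, and one needs length $m+k$, so $q \ge m+k$ suffices for those, while the hypothesis $q \ge m+k+1$ comfortably covers it (and also the extended constructions if one prefers to phrase length bounds uniformly); and (ii) explicitly invoking the $k \le m$ reduction so that "$k$ columns extend to $m$ columns" is legitimate. Neither is a genuine difficulty, so the argument is short.
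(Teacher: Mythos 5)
Your proposal is correct and follows essentially the same route as the paper: instantiate the scheme with a Reed--Solomon (MDS) code of length $n=m+k$, use the fact that every set of $m$ (hence, since $k\le m$, every set of $k$) columns of $\mathbf{G}$ is linearly independent to verify that the code is \emph{proper}, and invoke \Tref{sub_block_sec} to get $t=d_{\min}-1=k$. The only blemish is the arithmetic slip ``$q \ge n-1 = m+k+1$'' (in fact $n-1=m+k-1$), which you implicitly correct later when you note that length-$(m+k)$ RS codes exist once $q\ge m+k$.
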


\begin{proof}
To prove the theorem, we give an example of a code that is shown to be \textit{proper} with $t=k$. We utilize Reed-Solomon (RS) codes, which are a well-known family of codes that are maximum distance separable (MDS) codes, i.e., $d_{\mathrm{min}}=n-m+1=k+1$ \cite{roth2006introduction}. For any $[n,m,d_{\mathrm{min}}]_q$ RS code, all we need to show is that the matrix $\textbf{W}$ which is the transpose of the generator matrix $\textbf{G}$ of the RS code can be used to construct a \textit{proper} code. One of the properties of MDS codes is that every set of $m$ columns of the matrix $\textbf{G}$ are linearly independent \cite[Proposition 11.4]{roth2006introduction}. Note that rows of $\textbf{W}$ correspond to columns of $\textbf{G}$. Hence, any choice of $m$ columns of $\textbf{G}$ will have rank $m$, and the remaining $k$ columns of $\textbf{G}$ will also have rank $k$ as it is assumed that $k<m$. Therefore, the code generated by $\textbf{W}$ is \textit{proper}, with threshold $t=k$.
\end{proof}

Note that the straightforward Gaussian elimination method, with complexity $O(m^3)$, can be always used for decoding of coding schemes based on linear block codes. However, when the underlying linear block code belongs to well-known families of linear block codes, e.g., Reed-Solomon codes, it is desirable to study low-complexity decoders for the resulting coding schemes using the off-the-shelf encoding/decoding methods. For instance, low-complexity decoding of RS codes is based on a low-complexity computation of the inverse of a Vandermonde matrix. Now, for the coding schemes based on RS codes, the evaluation points for the RS encoder are chosen as consecutive powers of $\alpha$, where $\alpha$ is a primitive element of $\Fq$. The specific choice of the message and key indices is as follows: the first $m$ rows of $\textbf{W}$ are dedicated for the message $\bm$, and the last $k$ rows of $\textbf{W}$ are dedicated for the key $\bk$. Since $\textbf{W}$ is a Vandermonde matrix, this choice of message indices together with the specific choice of evaluation points result in a scenario where the submatrix $\textbf{W}_{\mathcal{A}}$ is also a Vandermonde matrix. To decode a codeword using the key, the decoder computes $\bm = (\bc - \bk \textbf{W}_{\mathcal{A}^c}) \textbf{W}_{\mathcal{A}}^{-1}$. Note that the inverse of a square Vandermonde matrix of order $m$ can be computed with complexity $\mathcal{O}(m^2)$ \cite{bjorck1970solution}. This results in $\mathcal{O}(m^2)$ complexity for the decoding in coding schemes based on RS codes.

\section{Low-Complexity Construction}\label{construction}
In this section, we focus on designing binary codes to meet the reliability and security conditions while providing encoding and decoding algorithms with linear/quasi-linear complexity. To this end, we consider Reed-Muller codes due to their recursive construction and low-complexity decoder. In addition, since they are designed with the objective of maximizing the minimum distance, given their particular recursive structure, we can achieve a reasonably high threshold $t$ for the $t$-threshold security.

It is worth noting that various types of decoders for Reed-Muller codes are proposed in the literature, see, e.g., \cite{reed,schnabl1995soft,dumer2004recursive}. However, the proposed decoder here differs from these works as it has different constraints and objectives. The goal of the decoder here is not to correct errors, but rather to successfully recover the message from an error-free codeword encoded by having the message as well as the key as the input. Also, the message cannot be retrieved completely without complete knowledge of the key itself. This shows the need to adapt or modify encoders/decoders in such a way that they can be utilized for threshold-security decoding accordingly.

\subsection{Encoder}\label{encoder}

First, a brief description of Reed-Muller codes is provided. An RM($s, r$) code is a $[2^s,\sum_{i=0}^{r} \binom{s}{i}, 2^{s-r}]_2$ linear block code. The generator matrix of the RM($s, r$) code, denoted by $\textbf{G}(s,r)$, is obtained by keeping the rows with the Hamming weight of at least $2^{s-r}$ from the matrix $\textbf{F}^T=(\textbf{F}_2^{\otimes s})^T$ and removing the remaining rows, where $\otimes$ denotes the Kronecker product, $T$ is the transpose operator, and $\textbf{F}_2$ is the following kernel matrix
\begin{align}
    \textbf{F}_2=\begin{bmatrix} 
    1 & 0 \\
    1 & 1 
    \end{bmatrix}.
\end{align}

Although there are different ways of describing the encoding and the generator matrix of RM codes, the above description helps us to choose the message and key indices, which is the next step towards designing a code that is \textit{proper}. Due to the recursive structure of $\textbf{F}$, it can be observed that indices of the rows with the lowest weight, the second lowest weight, etc, from $\textbf{F}$ correspond to indices of columns with the highest column weight, the second highest weight, etc, from $\textbf{F}$, respectively. When specifying the matrix $\textbf{G}(s,r)$ as a sub-matrix of $\textbf{F}^T$ we choose the set of indices of the removed rows from $\textbf{F}^T$ as $\mathcal{A}^c$ to assign the rows of $\textbf{W}$ dedicated for the key, while the indices of the remaining rows are used as the message indices $\mathcal{A}$. Then we have the following proposition.

\begin{proposition}
The choice of the sets $\mathcal{A}$, and $\mathcal{A}^c$ as mentioned above results in a \textit{proper} code.
\end{proposition}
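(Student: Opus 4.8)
The plan is to check the two rank requirements of Definition~\ref{def_proper} directly, after first turning the prescription for $\mathcal{A}$ and $\mathcal{A}^c$ into an explicit description of $\textbf{W}$. Identify $[2^s]$ with $\{0,1\}^s$ and write $w(\cdot)$ for Hamming weight. Row $\bg$ of $\textbf{F}^{T}=(\textbf{F}_2^{\otimes s})^{T}$ has weight $2^{\,s-w(\bg)}$, so the rows retained to form $\textbf{G}(s,r)$---those of weight $\ge 2^{s-r}$---are exactly those indexed by strings of weight $\le r$; hence $\mathcal{A}=\{\bg:w(\bg)\le r\}$, $\mathcal{A}^c=\{\bg:w(\bg)\ge r+1\}$, and $\textbf{W}=\textbf{G}(s,r)^{T}$ is the submatrix of $\textbf{F}_2^{\otimes s}$ consisting of the columns indexed by $\mathcal{A}$, with $(\bg,\bh)$ entry equal to $1$ iff $\supp(\bh)\subseteq\supp(\bg)$. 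I would set up this dictionary first, since both conditions then become statements about this zeta-type matrix.

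For $\rank(\textbf{W}_{\mathcal{A}})=m$: the matrix $\textbf{W}_{\mathcal{A}}$ has rows and columns both indexed by $\mathcal{A}$ with $(\bg,\bh)$ entry $[\supp(\bh)\subseteq\supp(\bg)]$, so ordering $\mathcal{A}$ by nondecreasing weight makes it triangular---a nonzero off-diagonal entry forces $w(\bh)<w(\bg)$---with all-ones diagonal, hence invertible. This step is routine.

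The substance is $\rank(\textbf{W}_{\mathcal{A}^c})=k$, and for this I would pass to the dual code. Since $\textbf{W}=\textbf{G}(s,r)^{T}$, the rows of $\textbf{W}_{\mathcal{A}^c}$ are exactly the columns of $\textbf{G}(s,r)$ indexed by $\mathcal{A}^c$, so the requirement says those $k$ columns are linearly independent; as $\textbf{G}(s,r)v=0$ exactly when $v$ is orthogonal to every row of $\textbf{G}(s,r)$, i.e., $v\in\mathrm{RM}(s,r)^{\perp}$, this is equivalent to $\mathrm{RM}(s,r)^{\perp}$ having no nonzero codeword with support contained in $\mathcal{A}^c$. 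Using the standard identity $\mathrm{RM}(s,r)^{\perp}=\mathrm{RM}(s,s-r-1)$ and the evaluation picture of RM codes (coordinate $\bh$ $\leftrightarrow$ the point $\bh\in\Ftwo^s$, of weight $w(\bh)$), the statement becomes: no nonzero polynomial of degree $\le d:=s-r-1$ on $\Ftwo^s$ vanishes at every point of weight $\le r$. I would then invoke the standing assumption $k<m$ (otherwise a one-time pad already works); writing $k=\sum_{i>r}\binom{s}{i}=\sum_{i<s-r}\binom{s}{i}$, this forces $s\le 2r$, so $d\le r-1<r$, and therefore $\{w\le d\}\subseteq\{w\le r\}$. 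It then suffices to show that $\mathrm{RM}(s,d)$ is already injective on the coordinates of weight $\le d$; but that restriction is represented by the square matrix whose rows (points of weight $\le d$) and columns (degree-$\le d$ monomials) are both indexed by subsets of $[s]$ of size $\le d$, with $(\bh,S)$ entry $[S\subseteq\bh]$, and by the weight ordering this matrix is triangular with unit diagonal, hence invertible. So no nonzero degree-$\le d$ polynomial vanishes on $\{w\le d\}$, a fortiori not on $\{w\le r\}$, and $\rank(\textbf{W}_{\mathcal{A}^c})=k$ follows, making the code \textit{proper}.

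I expect the third paragraph to be the only non-formal step. Its two key moves are (i) recognizing that full row rank of $\textbf{W}_{\mathcal{A}^c}$ is a statement about $\mathrm{RM}(s,r)^{\perp}$---equivalently, about interpolating a bounded-degree polynomial from a Hamming ball---and (ii) observing that the hypothesis $k\le m$ is precisely what yields $d\le r$, so that the relevant evaluation matrix contains an invertible triangular block. The hypothesis is genuinely needed: when $k>m$ the matrix $\textbf{W}_{\mathcal{A}^c}$ has more rows than columns and the claim fails.
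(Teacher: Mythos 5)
Your proof is correct, but the substantive half of it takes a genuinely different route from the paper. For $\rank(\textbf{W}_{\mathcal{A}})=m$ both arguments are the same triangularity observation. For $\rank(\textbf{W}_{\mathcal{A}^c})=k$, the paper argues by induction on $s$ using the recursive Plotkin-type block decomposition of $\textbf{F}(s,r')$ (with separate treatments of even and odd $s$, first for the maximal admissible $r'$ and then a remark extending to shorter keys), whereas you pass to the dual code: full row rank of $\textbf{W}_{\mathcal{A}^c}$ is the absence of a nonzero codeword of $\mathrm{RM}(s,r)^{\perp}=\mathrm{RM}(s,s-r-1)$ supported on $\mathcal{A}^c$, i.e., of a nonzero degree-$\le(s-r-1)$ polynomial vanishing on the Hamming ball $\{w\le r\}$ in $\Ftwo^{\,s}$, which you rule out by exhibiting a unitriangular square evaluation matrix on $\{w\le s-r-1\}$. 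Your version is non-inductive, handles all admissible $r$ in one stroke, and makes transparent exactly where the standing hypothesis $k<m$ enters (it forces $s\le 2r$, hence $s-r-1<r$, so the ball you interpolate on sits inside $\mathcal{A}$); it does, however, rely on the external duality identity $\mathrm{RM}(s,r)^{\perp}=\mathrm{RM}(s,s-r-1)$. The paper's induction is more elementary in its ingredients and stays within the same recursive structure that drives the encoder and the successive cancellation decoder in the rest of the section, at the cost of a longer case analysis whose rank-additivity steps are argued somewhat informally. One small checkable detail in your favor: your exponent $d=s-r-1$ is the correct one for the key length $k=\sum_{i=0}^{s-r-1}\binom{s}{i}$, consistent with the minimum distance $d_{\mathrm{min}}=2^{s-r}$ and $t=d_{\mathrm{min}}-1$ used elsewhere in the paper.
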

\begin{proof}
To prove this proposition, it suffices to show that $\textbf{W}_{\mathcal{A}}$ and $\textbf{W}_{\mathcal{A}^c}$ are both full row rank.

First, it is shown that $\textbf{W}_{\mathcal{A}}$ is full row rank. Note that for a full rank lower-triangular matrix, a submatrix obtained by removing a subset of columns and rows with the same indices results also in a full rank lower-triangular matrix. Also, note that $\mathcal{A}^c$ is the subset of indices of deleted columns as well as that of the rows dedicated for the key from $\textbf{F}$. Hence, the matrix $\textbf{W}_{\mathcal{A}}$ is full row rank.

Next, we show that $\textbf{W}_{\mathcal{A}^c}$ is full row rank. This is done by induction. Note that $k < m$ is assumed, as mentioned before. Also, to simplify the proof, let us have $r'=s-r$, and also re-express $k$ and $m$ in the remainder of the proof as follows
\begin{align}
    k=\sum_{i=0}^{r'} \binom{s}{i}, \nonumber
\end{align}
and 
\begin{align}
    m=\sum_{i=r'+1}^{s} \binom{s}{i}, \nonumber
\end{align}
where we have $r'\leq \floor{\frac{s-1}{2}}$. Note that $\textbf{W}_{\mathcal{A}^c}$ contains the $\sum_{i=0}^{r'} \binom{s}{i}$ rows dedicated for the key from $\textbf{F}$ with the same number of lowest-weight columns removed. Let this matrix be also denoted by $\textbf{F}(s,r')$. Let also $\textbf{F}'(s,r')$ denote the matrix that contains the $\sum_{i=0}^{r'} \binom{s}{i}$ rows dedicated for the key from $\textbf{F}$ with only $\sum_{i=0}^{r'-1} \binom{s}{i}$ lowest weight columns removed. Due to the recursive structure of the matrix $\textbf{F}$, $\textbf{F}(s,r')$ can be expressed as follows:

\begin{align}\label{matrixfsr}
\textbf{F}(s,r')=\begin{bmatrix} 
\textbf{F}(s-1,r'-1) & \textbf{0} \\
\textbf{F}'(s-1,r') & \textbf{F}(s-1,r')
\end{bmatrix}.
\end{align}

Next, we show that the matrix $\textbf{F}(s,r')$ is full row rank for the maximum value $r'=\floor{\frac{s-1}{2}}$ and for $s \geq 2$ by induction on $s$. Then it will be discussed why this also holds for $r'<\floor{\frac{s-1}{2}}$.

\textbf{Step 1:} The induction basis is for $s=2$ and $r'=0$, and for $s=3$ and $r'=1$, which can be easily verified, i.e., for $s=2$ and $r'=0$, the rank of $\textbf{F}(2,0)$ is $1$. Also, for $s=3$ and $r'=1$, the rank of $\textbf{F}(3,1)$ is $4$.

\textbf{Step 2:} Suppose that the induction hypothesis holds for $s$ and $s$ is odd. Then we have the following matrix:
\begin{align}
\textbf{F}(s+1,r')=\begin{bmatrix} 
\textbf{F}(s,r'-1) & \textbf{0} \\
\textbf{F}'(s,r') & \textbf{F}(s,r')
\end{bmatrix}.
\end{align}
We need to show that $\rank(\textbf{F}(s+1,r'))=\sum_{i=0}^{r'} \binom{s+1}{i}$. Note that $\textbf{F}(s,r')$ is full row rank by induction hypothesis, i.e., $\rank(\textbf{F}(s,r'))=\sum_{i=0}^{r'} \binom{s}{i}$. Then $\textbf{F}(s,r'-1)$, which contains a subset of the rows in $\textbf{F}(s,r')$, is also full row rank. Hence, we have $\rank(\textbf{F}(s,r'-1))=\sum_{i=0}^{r'-1} \binom{s}{i}$. Therefore,
\begin{align}
\rank(\textbf{F}(s+1,r'))&=\rank(\textbf{F}(s,r'-1))\nonumber\\
    &\hspace{0.1 in}+\rank(\textbf{F}(s,r')),\\
&=\sum_{i=0}^{r'-1} \binom{s}{i}+\sum_{i=0}^{r'} \binom{s}{i},\\
&=\sum_{i=0}^{r'} \binom{s+1}{i},
\end{align}
which is equal to the number of rows in $\textbf{F}(s+1,r')$. Hence, it is full row rank.

For even $s$ with corresponding parameter $r'$, we need to show the following matrix is full row rank
\begin{align}\label{matrixF}
\textbf{F}(s+1,r'+1)=\begin{bmatrix} 
\textbf{F}(s,r') & \textbf{0} \\
\textbf{F}'(s,r'+1) & \textbf{F}(s,r'+1)
\end{bmatrix}.
\end{align}
First, we have $\rank(\textbf{F}(s,r'))=\sum_{i=0}^{r'} \binom{s}{i}$ by induction hypothesis. Regarding $\rank(\textbf{F}'(s,r'+1))$, we can see that $\textbf{F}'(s,r'+1)$ has $\sum_{i=0}^{r'} \binom{s}{i}$ rows that are also included in $\textbf{F}(s,r')$. However, when considering the indices of such rows in $[\textbf{F}'(s,r'+1)\hspace{0.05in} \textbf{F}(s,r'+1)]$, the corresponding rows are independent from all other rows in $[\textbf{F}(s,r'), \hspace{0.05in} \textbf{0}]$. Furthermore, there are $\binom{s}{r'+1}$ additional rows in $\textbf{F}'(s,r'+1)$ that are linearly independent from the remaining rows due to the structure of the zero blocks in this matrix, similar to \eqref{matrixfsr}. We can then find the rank of $\textbf{F}(s+1,r'+1)$ as follows
\begin{align}
\rank(\textbf{F}(s+1,r'+1))&=\rank(\textbf{F}(s,r'))\nonumber\\
    &\hspace{0.1 in}+\rank(\textbf{F}'(s,r'+1)),\\
&=\sum_{i=0}^{r'} \binom{s}{i}+\sum_{i=0}^{r'} \binom{s}{i}\nonumber\\
    &\hspace{0.1 in}+\binom{s}{r'+1},\\
&=\sum_{i=0}^{r'+1} \binom{s+1}{i}.
\end{align}
Hence, $\textbf{F}(s+1,r'+1)$ is full row rank, and the induction hypothesis holds for $s+1$ with the maximum value of $r'$. For keys of shorter lengths, it is straightforward to see that for any $r''<r'$, the matrix $\textbf{F}(s,r'')$ whose rows are a subset of $\textbf{F}(s,r')$ with additional columns inserted at different locations is also full row rank. This completes the proof. 
\end{proof}

\noindent \textbf{Remark 2.} In the proposed scheme based on RM codes, we have $n=2^s$, $m = \sum_{i=0}^{r} \binom{s}{i}$, for some $r\geq s/2$, and $k = n-m < m$. Note that the underlying RM code has rate $R>\frac{1}{2}$. By using Theorem \ref{sub_block_sec} and noting that the minimum distance of the underlying code is $2^{s-r}$, the achievable threshold security parameter $t$ for the RM-based scheme with parameters $(s,r)$ is $t=2^{s-r}-1$. Note that, in general, for an RM code of constant rate, i.e., $R = O(1)$, we have $r = s/2+O(\sqrt{s})$. Hence, the threshold security parameter of the corresponding scheme is $t = \sqrt{n} \exp(O(\sqrt{\log n}))$.

\subsection{Decoder}

In this part, we discuss a low-complexity successive cancellation (SC) decoder to decode the message in the RM-based coding scheme while utilizing the shared key. As Reed-Muller codes are closely related to polar codes \cite{arikan}, a decoder closely related to that of polar codes described in \cite{arikan} is natural. However, there are fundamental differences that will be clarified throughout this section. 

The decoder is described in Algorithm \ref{decoderrecursive}. We first embed erasures within the entries of the codeword $\bc$ in order to get a vector of length $n$, denoted by $\bz$, by inserting the erasures at locations indexed by $\mathcal{A}^c$. More specifically, $\bz=\pi_1(\be_k,\bc)$ where $\bc$ is the codeword and $\be_k$ is an erasure vector of length $k$ such that the permutation places the erasures at locations denoted by $\mathcal{A}^c$. Note that, as mentioned before, $\mathcal{A}^c$ corresponds to the location of the key bits at the encoder. 

The decoder takes the key bits $\bk$, the codeword embedded with erasures $\bz=\pi_1(\be_k,\bc)$, indices of the key bits $\mathcal{A}^c$ and a recursion index $i$ as inputs, and outputs the vector $\bu=[u_1,u_2,\mydots,u_n]=\pi(\bk,\bm)$ from which the message can be retrieved $\bm=\bu_{\mathcal{A}}$. The high-level idea of the decoder is as follows. The vector $\bz$ is divided into two parts; $z_{1}^{n/2}=[z_1,z_2,\mydots,z_{n/2}]$ and $z_{n/2+1}^{n}=[z_{n/2+1},z_{n/2+2},\mydots,z_{n}]$, that are decoded successively. As opposed to the SC decoder of polar codes \cite{arikan}, the second sub-block is processed first, cancelled from the first sub-block, and then the first sub-block is processed. Each of these sub-blocks is also decoded recursively by splitting them into two parts and so on. 

\noindent \textbf{Remark 3.} When describing the recursive SC decoding process we often use the binary tree terminology in which the input codeword, i.e., $\bz$, is assigned to the root of the tree and then the first and the second sub-blocks are assigned to the \textit{left child} and the \textit{right child}, respectively. The decisions are made at the leaves of the tree and then are re-encoded and propagated back through the tree, see, e.g., \cite{mahdavifar2014fast} for more details.

\begin{algorithm}[t]
\caption{Successive cancellation decoder (Decoder)}
\begin{algorithmic}[1]\label{decoderrecursive}
\STATE Initialization: $i=1$.
\STATE Input: $\bk$, $z_1^n=\pi_1(\be_k,\bc)$, $\mathcal{A}^c$, $i$.
\STATE Output: $h_1^n$, $u_1^n$.
    \IF{$n=2$}
      \IF{$z_2=e$}
        \STATE $u_i=k_i$
        \ELSE
        \STATE  $u_{i}=z_{2}$
        \ENDIF
        \IF{$z_1=e$}
            \STATE  $u_{i-1}=k_{i-1}$
            \ELSE
            \STATE $u_{i-1}=u_i \oplus z_1$
        \ENDIF
        \STATE $h_1^n=[u_{i-1}\oplus u_{i}, u_i]$
        \ELSE
            \STATE $\bh' \leftarrow$ Decoder($\bk_2,z_{n/2+1}^n,\mathcal{A}_2^c,2i$)
            \STATE $\Bar{z}_{1}^{n/2}=\bh'  \oplus z_1^{n/2}$
            \STATE $\bh'' \leftarrow$ Decoder($\bk_1,\Bar{z}_{1}^{n/2},\mathcal{A}_1^c,2i-1$)
            \STATE $h_1^n=[\bh''\oplus \bh', \bh']$
        \ENDIF
        \RETURN $h_1^n$
\end{algorithmic}
\end{algorithm}

The following claim verifies that the decoder successfully outputs the message bits with probability $1$ for any key length. Note that since the proof follows by induction, we discard the assumption that $k\leq m$ and simply show the claim for any $k \leq n$. 

\begin{claim}
The RM-based coding scheme can be successfully decoded using the SC decoder in Algorithm \ref{decoderrecursive} for any key length $k \leq n$.
\end{claim}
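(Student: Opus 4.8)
The plan is to prove the claim by induction on $n$, the length of the embedded vector $\bz$, mirroring the recursive structure of Algorithm \ref{decoderrecursive}. The key observation is that the algorithm implements the Plotkin $(u, u \oplus v)$ decomposition: if the root vector $\bz$ corresponds to an input $\bu$ of length $n$ via $\bc = \bu \textbf{W}$ with $\textbf{W} = (\textbf{F}_2^{\otimes s})^T$ restricted appropriately, then splitting $\bz$ into its left half $z_1^{n/2}$ and right half $z_{n/2+1}^n$ should correspond exactly to two sub-problems of the same type but at length $n/2$, with the recursive kernel structure from \eqref{matrixfsr}. So the first step is to set up the induction carefully: state precisely what the ``correctness invariant'' is at each node of the recursion tree — namely that, given the true key bits in the positions indexed by $\mathcal{A}^c$ (embedded as erasures in $\bz$) and given that the sibling subtrees above have been decoded correctly, the call returns $h_1^n$ equal to the true partial codeword and $u_1^n$ equal to the true partial input. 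Because the proof is by induction and the only place $k \leq m$ was used was to guarantee properness, I would (as the remark before the claim notes) drop that assumption and argue for all $k \leq n$.

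Second, I would verify the base case $n = 2$ directly. Here $\bz = (z_1, z_2)$ and the $2 \times 2$ kernel is $\textbf{F}_2$. There are four sub-cases according to whether each of positions $1, 2$ is a key position (erasure) or a message/codeword position. The algorithm's logic — recover $u_i$ from $z_2$ or from the key, then recover $u_{i-1}$ from $z_1 \oplus u_i$ or from the key, then re-encode as $h_1^2 = [u_{i-1} \oplus u_i, u_i]$ — should be checked against the encoding relation $(c_1, c_2) = (u_{i-1} \oplus u_i, u_i)$ (coming from $\textbf{F}_2^T = \begin{bmatrix} 1 & 1 \\ 0 & 1 \end{bmatrix}$) in each sub-case. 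This is a routine finite check.

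Third, for the inductive step I would argue that the right child call Decoder($\bk_2, z_{n/2+1}^n, \mathcal{A}_2^c, 2i$) is itself a well-formed instance of the same problem: the right half of $\textbf{F}_2^{\otimes s}$ is block-lower-triangular with $\textbf{F}_2^{\otimes(s-1)}$ on the diagonal, so $z_{n/2+1}^n$ depends only on the ``right'' half of $\bu$, with key positions $\mathcal{A}_2^c$ (the part of $\mathcal{A}^c$ landing in the second half) and an appropriately shifted recursion index $2i$; by the induction hypothesis this returns the correct $\bh'$, the right-half codeword. Then $\bar z_1^{n/2} = \bh' \oplus z_1^{n/2}$ strips off the cross-term contributed by the right-half input to the left half (this is exactly the Plotkin cancellation), leaving a well-formed length-$n/2$ instance whose key positions are $\mathcal{A}_1^c$; the induction hypothesis applied to the left child then gives the correct $\bh''$. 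Finally $h_1^n = [\bh'' \oplus \bh', \bh']$ re-encodes correctly, again by the Plotkin form.

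The main obstacle I anticipate is bookkeeping rather than conceptual: one must be careful that the index sets $\mathcal{A}_1^c, \mathcal{A}_2^c$ and the recursion indices $2i-1, 2i$ are exactly the images of $\mathcal{A}^c$ under the split, and — more subtly — that in the recursive decomposition \eqref{matrixfsr} the off-diagonal block $\textbf{F}'(s-1, r')$ does \emph{not} interfere: i.e., that the cancellation step genuinely reduces the left subtree to a problem of the same form with no residual coupling. Convincing oneself that the erasure-embedding and the key bits are consistently propagated through both halves (so that every erased position in every subtree is filled by the matching key bit, exactly as in the base case) is where the argument needs the most care. Once the invariant is stated correctly, each of these is a direct consequence of the $(u, u\oplus v)$ structure, and the claim follows since at the root every key position is resolved and the message is read off as $\bm = \bu_{\mathcal{A}}$.
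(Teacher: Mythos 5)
Your proposal is correct and follows essentially the same route as the paper's proof: induction on the block length with base case $n=2$, an induction hypothesis stated for arbitrary key length and arbitrary key-bit positions, and an inductive step that decodes the right child first, cancels its re-encoding from the left half via the Plotkin $(u,u\oplus v)$ structure, and then invokes the hypothesis on the left child. The bookkeeping issues you flag (the split of $\mathcal{A}^c$ into $\mathcal{A}_1^c,\mathcal{A}_2^c$ and the consistency of erasure/key propagation) are exactly what the paper handles by phrasing the hypothesis as holding ``regardless of the indices of the key bits.''
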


\begin{proof}
We use induction on $l$, where $n=2^l$, to show that the claim holds.

\textbf{Step 1:} For the induction basis, consider $n=2$. We need to show decoding is successful for $k=0,1,2$. For $k=0$, which corresponds to the case with no erasure, the induction hypothesis holds trivially as $\textbf{F}$ is non-singular. For $k=1$, one needs to show the induction hypothesis for both possible cases for $\mathcal{A}^c$. First, let us consider that $z_1=e$ and $z_2=c_1$, which corresponds to $u_1=k_1$, and $u_2=m_1$. In this case, the decoder outputs $u_1=k_1$ and $u_2=z_2$. For the other case where $z_1=c_1$ and $z_2=e$, which corresponds to $u_1=m_1$ and $u_2=k_1$, the decoder first corrects the erasure, assigning $u_2=k_1$. It then computes $u_1=m_1=u_2\oplus z_1=k_1\oplus z_1$. Finally, we show it succeeds for $k=2$, where both $z_1$ and $z_2$ are erased. Then $u_1=k_1$ and $u_2=k_2$ and the decoder is successful.

\textbf{Step 2:} Now, suppose that the induction hypothesis holds for $n=2^l$ and for any $k \leq 2^l$, where $k$ is the length of the key, regardless of the indices of the key bits. However, note that, as specified before, the row indices corresponding to the key bits and the column indices corresponding to the erasures are the same and are both denoted by $\mathcal{A}^c$. We now show that the claim is true for $n=2^{l+1}$ and any $k \leq 2^{l+1}$. Let us split the key indices $\mathcal{A}^c$ into two sets, $\mathcal{A}_1^c$ and $\mathcal{A}_2^c$, with sizes $|\mathcal{A}_1^c|=k_1$ and $|\mathcal{A}_2^c|=k_2$, where $k=k_1+k_2$, as follows. The set $\mathcal{A}_1^c$ consists of the indices of erasures in $z_{1}^{n/2}$. Also, let $\bk_1$ denote the corresponding part of the key of size $k_1$. Similarly, $\mathcal{A}_2^c$ consists of the indices of erasures in $z_{n/2+1}^{n}$. Also, let $\bk_2$ denote the corresponding part of the key of size $k_2$. First, the right child with input $z_{n/2+1}^{n}$, which has $k_2$ erasures, is processed. Note that there are also $k_2$ known key bits indexed by $\mathcal{A}_2^c$ in the second half sub-block $u_{n/2+1}^n$. Note that the decoder for the right child has an input of length $n'=2^l$ and $k'=k_2$ erasures as well as key bits $\bk_2$ indexed by $\mathcal{A}_2^c$. The decoder succeeds by the induction hypothesis. The right child then passes 
$$
u^{n}_{n/2+1}\textbf{F}_2^{\otimes l} \oplus z_{1}^{n/2}=\bh'\oplus z_{1}^{n/2}=\Bar{z}_{1}^{n/2}
$$
to the left child. The decoder is then run on $\Bar{z}_{1}^{n/2}$, which is of length $n'=2^l$ and has $k'=k_1$ erasures and key bits $\bk_1$ indexed by $\mathcal{A}_1^c$. The decoder is successful on this node as well by the induction hypothesis. Hence, the decoder is successful for $n=2^{l+1}$ which completes the proof of the claim.
\end{proof} 

\section{Robustness}\label{robustness}
In this section we study a natural scenario for extension of the considered setup and the results. In particular, it is assumed that a noisy channel is present between the legitimate parties and the goal is to study the robustness of the framework and the proposed solution when channel noise is present. 

The revised system model, shown in Figure \ref{fig:systemmodel_ch}, is as follows: the channel between Alice and Bob is no longer noiseless, and it can be a certain type of channel to be studied, e.g., binary symmetric channel (BSC), binary erasure channel (BEC), additive-white Gaussian noise channel (AWGN), etc. However, for the eavesdropper, we still consider a worst-case scenario from the legitimate parties' perspective. In other words, it is assumed that Eve receives the transmitted codeword through a noiseless channel, and hence, she has access to the codeword error-free. Alice aims to utilize a coding scheme such that the threshold security requirement at Eve is satisfied while establishing a reliable communication with Bob that is robust in the presence of channel noise. 

\begin{figure}
   \centering
   \includegraphics[trim=35.7pt 640.14pt 137.25pt 10.3pt, clip,width=3.4in]{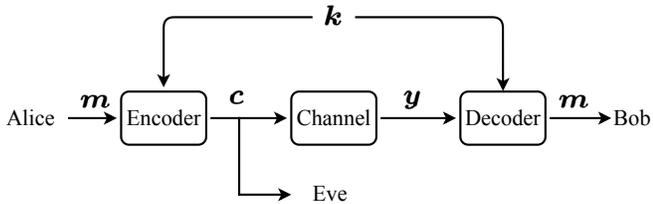}
   \caption{Modified setup for the proposed coding scheme in the presence of a noisy channel.}
   \label{fig:systemmodel_ch}
\end{figure}

Note that the assumption on Eve's observation here makes it reasonable to keep the conditions in \eqref{security1} and \eqref{security2} the same in this revised model. On the other hand, the reliability condition in \eqref{reliability1} needs to be modified to account for the noisy channel. We do this from a conventional block coding perspective where reliability is measured in terms of a certain number of errors and erasures that can be corrected. More specifically, the reliability condition is still stated as 
\begin{align}\label{reliability2}
    H(\bm|\by,\bk)&=0,
\end{align}
provided that the number of erasures and errors introduced in $\by$ satisfies a certain condition that depends on the underlying coding scheme. For instance, consider coding schemes based on linear block codes. Suppose that the minimum distance of the robustness coding scheme is $D_{\mathrm{min}}$ when the key is fixed, which is different from the minimum distance of the threshold security coding scheme, i.e., $d_{\mathrm{min}}$. Then the condition on the number of errors and erasures is simply $2 \tau+\rho \leq D_{\mathrm{min}}-1$, where $\tau$ is the number of errors and $\rho$ is the number of erasures, same as in conventional block codes. 

In the remainder of this section, we discuss a general method to construct codes for threshold security and robustness, and describe an explicit low-complexity construction based on Reed-Muller codes for binary erasure channels along with a SC decoder.

\subsection{General construction}\label{robustconstruction}

A straightforward solution to construct coding schemes for the setting described in this section is by utilizing concatenation of two codes. More specifically, a coding scheme, constructed to guarantee the desired threshold security in the error-free case, would be concatenated with an inner code, that can be an off-the-shelf block code, to guarantee the desired reliability for Alice-Bob communication. Although this solution is straightforward, one needs to ensure that the threshold security guarantee is not compromised when more redundancy is added through the inner encoder which will be then revealed to Eve. 

In the aforementioned concatenation scheme, the overall encoder and decoder at Alice and Bob, respectively, are referred to as \textit{supercoder} and \textit{superdecoder}, respectively. The construction of the concatenated scheme is described in more details next. Consider a \textit{proper} coding scheme, that guarantees threshold security requirement, that is obtained from an $[n,m,d_{\mathrm{min}}]_q$ linear block code with the generator matrix $\textbf{W}^T$. Also, consider an error-correcting code, used as an inner code to guarantee the reliability, that is an $[N,m,D_{\mathrm{min}}]_q$ linear block code with the generator matrix denoted by $\textbf{G}_r$. It is important to note that both codes have the same dimension $m$. 

The encoding process is as follows. First, $\bu=\pi(\bk,\bm)$ is passed through the outer threshold security encoder that multiplies $\bu$ by $\textbf{W}$. The result is then passed to the inner encoder, which multiplies its input by $\textbf{G}_r$. Then the resulting codeword $\bc=\bu \textbf{W} \textbf{G}_r$ is transmitted to Bob through the noisy channel. Bob receives a corrupted version of the codeword $\bc$, denoted as $\by$, and passes it through the decoder consisting of an inner decoder and an outer decoder. The inner decoder retrieves $\tilde{\bc}=\bu \textbf{W}$. Note that we have $\tilde{\bc}$ error-free provided that the number of errors and/or erasures satisfies the given condition on the reliability guarantee of the inner code. Then $\tilde{\bc}$ together with the key $\bk$ are passed through the outer decoder, designed for the threshold security coding scheme; hence, retrieving $\bm$. The following lemma states that this construction does not compromise the key and threshold security conditions.
\begin{lemma}\label{robustcode}
The aforementioned concatenation scheme results in a $t$-threshold secure code.
\end{lemma}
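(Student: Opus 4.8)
The plan is to reduce the claim to the properties already established for the outer \textit{proper} coding scheme in \Tref{securitythm} and \Tref{sub_block_sec}, by showing that the transmitted codeword $\bc = \bu\textbf{W}\textbf{G}_r$ carries exactly the same information as the intermediate vector $\tilde{\bc} = \bu\textbf{W}$ produced by the outer encoder.

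First I would observe that, since $\textbf{G}_r$ is the generator matrix of an $[N,m,D_{\mathrm{min}}]_q$ linear block code, it has full row rank $m$; hence there is a matrix $\textbf{H}$ with $\textbf{G}_r\textbf{H} = \textbf{I}_m$, so that $\tilde{\bc} = \bc\textbf{H}$. Thus $\tilde{\bc}$ is a deterministic function of $\bc$, while $\bc = \tilde{\bc}\textbf{G}_r$ is trivially a deterministic function of $\tilde{\bc}$. Consequently, for any random vector $\bw$ we have $I(\bw;\bc) = I(\bw;\tilde{\bc})$ and $H(\bw|\bc) = H(\bw|\tilde{\bc})$; in other words, Eve's observation $\bc$ is informationally equivalent to $\tilde{\bc}$.

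With this equivalence, key security is immediate: $I(\bk;\bc) = I(\bk;\tilde{\bc}) = 0$, the last equality being \Tref{securitythm} applied to the outer \textit{proper} code, whose codeword is exactly $\tilde{\bc}$. Similarly, for any $\bv \subseteq \{u_1,u_2,\mydots,u_n\}$ with $|\bv| \le t = d_{\mathrm{min}}-1$ we get $H(\bv|\bc) = H(\bv|\tilde{\bc}) = H(\bv)$ by \Tref{sub_block_sec}; and because the threshold condition is governed entirely by $\tilde{\bc}$, repeating the maximality argument in the second half of the proof of \Tref{sub_block_sec} shows that $t = d_{\mathrm{min}}-1$ is still the largest admissible threshold.

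Finally I would verify reliability, i.e., $H(\bm|\by,\bk) = 0$ whenever the numbers of errors $\tau$ and erasures $\rho$ in the received word $\by$ satisfy $2\tau + \rho \le D_{\mathrm{min}}-1$: under this condition the inner bounded-distance decoder recovers $\tilde{\bc} = \bu\textbf{W}$ error-free, after which the outer decoder retrieves $\bm$ with probability one since the outer code is \textit{proper} (\Lref{decodability}). The only point requiring care is the informational equivalence of $\bc$ and $\tilde{\bc}$ — specifically, that exposing the inner redundancy $\textbf{G}_r$ to Eve leaks nothing new — but this is exactly what the injectivity of the map $\bx \mapsto \bx\textbf{G}_r$ delivers, so no substantive obstacle remains.
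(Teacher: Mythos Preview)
Your argument is correct. Both your proof and the paper's proof rest on the single fact that $\textbf{G}_r$ has full row rank $m$, but you package it differently. The paper uses rank preservation directly: since right-multiplication by a full-row-rank matrix preserves row rank, one has $\rank(\textbf{W}\textbf{G}_r)=m$, $\rank(\textbf{W}_{\mathcal{A}}\textbf{G}_r)=m$, $\rank(\textbf{W}_{\mathcal{A}^c}\textbf{G}_r)=k$, and $\rank(\textbf{W}_{\mathcal{B}^c}\textbf{G}_r)=m$, so the concatenated encoder $\textbf{W}\textbf{G}_r$ itself satisfies the hypotheses of Definition~\ref{def_proper} and \Tref{sub_block_sec}, and the conclusions follow by rerunning those earlier arguments on the new matrix. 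You instead use the full-row-rank assumption to exhibit a right inverse $\textbf{H}$ and conclude that $\bc$ and $\tilde{\bc}=\bu\textbf{W}$ determine each other, whence all information-theoretic quantities with respect to $\bc$ coincide with those already computed for $\tilde{\bc}$ in \Tref{securitythm} and \Tref{sub_block_sec}. Your route is slightly more conceptual and avoids re-verifying each rank condition, while the paper's route makes explicit exactly which structural properties of the composite matrix are being used; the two are equivalent in content.
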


\begin{proof}
To show that the lemma holds, we need to have $\rank(\textbf{W}\textbf{G}_r)=m$, $\rank(\textbf{W}_{\mathcal{A}}\textbf{G}_r)=m$, $\rank(\textbf{W}_{\mathcal{A}^c}\textbf{G}_r)=k$, and $\rank(\textbf{W}_{\mathcal{B}^c}\textbf{G}_r)=m$, where $\mathcal{A}$ and $\mathcal{A}^c$ are chosen such that the code is \textit{proper}, as stated in Definition \ref{def_proper}, and $\mathcal{B}^c$ is as defined in Theorem \ref{sub_block_sec}. It can be observed that all these equations hold simply because $\textbf{G}_r$ is full row rank.
\end{proof}

\subsection{Low-complexity construction}\label{robustlowcomplexity}
In this section, we aim at presenting a unified coding scheme, for threshold security and robustness, that can be decoded using one unified SC decoder. This would potentially result in more efficient hardware implementation and improved latency compared to the general concatenated scheme. 

In particular, a scenario with binary symbol erasures is considered, where at most $\rho=D_{\mathrm{min}}-1$ erasures are assumed to occur with $D_{\mathrm{min}}$ being the minimum distance of the underlying code. For the proposed coding scheme, an encoder is presented together with a superdecoder that simultaneously corrects erasures and decodes the message using the key. To this end, the coding scheme presented for noiseless channels in Section \ref{construction} is extended to be utilized along with an RM-based code to handle binary erasures.

\subsubsection{Encoder}\label{robustencoder}

In the considered scheme, the same RM code is used for threshold security and robustness. More specifically, an RM($s, r$) is used, which is, as previously described, a $[2^s,\sum_{i=0}^{r} \binom{s}{i}, 2^{s-r}]_2$ with the generator matrix denoted by $\textbf{G}(s,r)$. The encoder with input $\bu$, consisting of both the message and the key, outputs the codeword $\bc$ specified as follows:
\begin{align}
    \bc &= \bu \textbf{G}^T(s,r) \textbf{G}(s,r)=\bu \widetilde{\textbf{G}}(s,r),\label{robustcodeword}
\end{align}
where $\widetilde{\textbf{G}}(s,r)$ is a notation introduced here to denote $\textbf{G}^T(s,r) \textbf{G}(s,r)$. Note that the encoder can be implemented recursively, since $\widetilde{\textbf{G}}(s,r)$ can be expressed recursively as shown in \eqref{combinedmatrix4}.
\begin{figure*}
\begin{align}
    \widetilde{\textbf{G}}(s,r)& = \textbf{G}^T(s,r)\textbf{G}(s,r),\label{combinedmatrix1}\\
    & =\begin{bmatrix}
    \textbf{G}^T(s\!-\!1,r\!-\!1) & \textbf{G}^T(s\!-\!1,r)\\
    \textbf{0} & \textbf{G}^T(s\!-\!1,r)
    \end{bmatrix} \begin{bmatrix}
    \textbf{G}(s\!-\!1,r\!-\!1) & \textbf{0}\\
    \textbf{G}(s\!-\!1,r) & \textbf{G}(s\!-\!1,r)
    \end{bmatrix},\label{combinedmatrix2}\\
    & = \begin{bmatrix}
    \textbf{G}^T(s\!-\!1,r\!-\!1)\textbf{G}(s\!-\!1,r\!-\!1)+ \textbf{G}^T(s\!-\!1,r)\textbf{G}(s\!-\!1,r)& \textbf{G}^T(s\!-\!1,r)\textbf{G}(s\!-\!1,r)\\
    \textbf{G}^T(s\!-\!1,r)\textbf{G}(s\!-\!1,r) & \textbf{G}^T(s\!-\!1,r)\textbf{G}(s\!-\!1,r)
    \end{bmatrix},\label{combinedmatrix3}\\
    & = \begin{bmatrix}
    \widetilde{\textbf{G}}(s-1,r-1)+ \widetilde{\textbf{G}}(s-1,r)& \widetilde{\textbf{G}}(s-1,r)\\
    \widetilde{\textbf{G}}(s-1,r) & \widetilde{\textbf{G}}(s-1,r)
    \end{bmatrix}.\label{combinedmatrix4}
\end{align}
\hrule
\end{figure*}

Note that the encoder described by \eqref{robustcodeword} utilizes the construction presented in Section \ref{encoder}, which achieves threshold security parameter $t=2^{s-r}-1$, and we use the same choice of indices dedicated for the key and the message that results in a \textit{proper} code. 

\subsubsection{Decoder}
We present a unified SC superdecoder for the coding scheme described above that corrects $\rho \leq D_{\mathrm{min}}-1$ erasures, where $D_{\mathrm{min}}=2^{s-r}$, and recovers the message given the shared key. The recursive decoder takes the received bit sequence $y_1^n$, the shared key $\bk$, key indices $\mathcal{A}^c$, code parameters $s,r$, and a recursion parameter $j$ as inputs. Initially, $j=1$. It outputs $h_1^n$, i.e., which is equal to the codeword $\bc$ provided that $\rho \leq D_{\mathrm{min}}-1$, as well as $u_1^n=\pi(\bk,\bm)$, which is used to retrieve the message $\bm$, and a recursion index $j'$ used to track the index of the last decoded bit. A pseudocode for the decoder is shown in Algorithm \ref{decoderrecursiveBEC}. The following claim shows the success of the described decoder.

\begin{algorithm}[t]
\caption{Unified SC decoder for binary erasures (DecBE)}
\begin{algorithmic}[1]\label{decoderrecursiveBEC}
\STATE Input: $\bk$, $y_1^n$, $\mathcal{A}^c$, $s$, $r$, $j$.
\STATE Output: $h_1^n$, $u_1^n$, $j'$.
    \IF{$r=0$}
        \STATE $\mathcal{I} = [j,j+1,....,j+2^{s}-1]$
        \STATE $i_1 \leftarrow$ index of any non-erasure bit in $y_1^n$.
        \FOR{$i \in \mathcal{A}^c$}
            \STATE $u_{i}=k_{i}$
        \ENDFOR
        \STATE $i' \in \mathcal{I} \setminus \mathcal{A}^c$
        \STATE $u_{i'}=y_{i_1} \oplus_{i \in \mathcal{A}^c} u_i$
        \STATE $h_1^n=[y_{i_1}, y_{i_1},\mydots, y_{i_1}]$
        \STATE $j'=j+2^s-1$
    \ELSE
        \STATE $\Bar{\by}=y_{1}^{n/2}\oplus y_{n/2+1}^n$
        \STATE $\bh_1', u_1^{n/2}, j_1' \leftarrow$ DecBE($\bk_1,\Bar{\by},\mathcal{A}_1^c,s\!-\!1,r\!-\!1,j$)
        \STATE $\bh_2'=u_1^{n/2} \widetilde{\textbf{G}}(s-1,r)$
        \STATE $\bh'=[\bh_1' \oplus \bh_2', \bh_2']$
        \STATE $\Tilde{y}_1^{n}=y_{1}^{n}\oplus \bh'=[\Tilde{\by_1}, \Tilde{\by_2}]$
        \STATE $l=\argminl_{j\in{1,2}}$ (number of erasures in $\Tilde{\by_j}$)
        \STATE $\bh_1'', u_{n/2+1}^{n}, j' \leftarrow$ DecBE($\bk_2, \Tilde{\by}_l, \mathcal{A}_2^c, s\!-\!1,r,j_1'+1$) 
        \STATE $\bh''=[\bh_1'',\bh_1'']$
        \STATE $h_1^n=\bh' \oplus \bh''$
    \ENDIF
    \RETURN $u_1^n$, $h_1^n$, $j'$
\end{algorithmic}
\end{algorithm}

\begin{claim}
The proposed unified RM-based coding scheme together with the unified SC superdecoder in Algorithm \ref{decoderrecursiveBEC} successfully retrieves the message as long as $\rho \leq D_{\mathrm{min}}-1$.
\end{claim}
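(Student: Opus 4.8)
The plan is to prove the claim by induction on $s$, mirroring the structure of the recursive decoder \texttt{DecBE} and the recursive decomposition \eqref{combinedmatrix4} of $\widetilde{\textbf{G}}(s,r)$. The induction hypothesis will be: for all $r$ with $0\le r\le s$, and for any placement of the $k$ key indices $\mathcal{A}^c$, if the number of erasures in $y_1^n$ satisfies $\rho\le 2^{s-r}-1$, then \texttt{DecBE} returns the true codeword $\bc=\bu\widetilde{\textbf{G}}(s,r)$ in $h_1^n$ and the true input $\bu=\pi(\bk,\bm)$ in $u_1^n$. Throughout I will use that $\bc=\bu\widetilde{\textbf{G}}(s,r)$ is a codeword of the RM$(s,r)$ code (since $\widetilde{\textbf{G}}(s,r)=\textbf{G}^T(s,r)\textbf{G}(s,r)$ and the rows of $\textbf{G}^T(s,r)\textbf{G}(s,r)$ lie in the row space of $\textbf{G}(s,r)$), so erasure-correction up to $D_{\mathrm{min}}-1=2^{s-r}-1$ is information-theoretically possible; the work is showing the specific SC schedule realizes it while also pinning down $\bu$.

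\textbf{Base case.} For $r=0$ the RM$(s,0)$ code is the repetition code of length $n=2^s$, so $\widetilde{\textbf{G}}(s,0)$ is the all-ones matrix and every codeword is constant. With $\rho\le 2^s-1$ at least one coordinate is unerased, so $y_{i_1}$ reveals the whole codeword, giving $h_1^n=[y_{i_1},\dots,y_{i_1}]$. Since $\textbf{G}(s,0)$ has a single row, $\mathcal{I}\setminus\mathcal{A}^c$ is a single index $i'$, and the relation $\sum_{i\in\mathcal{I}} u_i = c_{i_1}$ (the only parity the repetition code imposes on $\bu$ through $\widetilde{\textbf{G}}$) together with the known key bits $u_i=k_i$, $i\in\mathcal{A}^c$, determines $u_{i'}$ as in line~11. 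This handles all $s$ at $r=0$; the degenerate sub-cases $n=2$ needed to seed the recursion in the inductive step are subsumed here.

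\textbf{Inductive step.} Assume the hypothesis for $s-1$ (all valid $r$). Fix $r\ge 1$ and write $y_1^n=[\by_1,\by_2]$ with $\by_i$ of length $n/2$. The key point is the channel-combining identity coming from \eqref{combinedmatrix4}: writing $\bu=[\bu_{\text{L}},\bu_{\text{R}}]$, one gets $\by_1\oplus\by_2$ equal (in the noiseless case) to $\bu_{\text{L}}\,\widetilde{\textbf{G}}(s-1,r-1)$, i.e.\ the "$u\oplus v$" branch is an RM$(s-1,r-1)$ codeword in $\bu_{\text{L}}$. I will argue that $\bar\by=\by_1\oplus\by_2$ has at most $2^{(s-1)-(r-1)}-1=2^{s-r}-1$ erasures — this is immediate since an erasure in $\bar\by$ at a position forces an erasure in $\by_1$ or $\by_2$ at that position, and $\rho\le 2^{s-r}-1$ — so by the induction hypothesis the first recursive call decodes $\bu_{\text{L}}$ and the codeword $\bh_1'=\bu_{\text{L}}\widetilde{\textbf{G}}(s-1,r-1)$ exactly. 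Re-encoding via $\bh_2'=\bu_{\text{L}}\widetilde{\textbf{G}}(s-1,r)$ and subtracting (lines 16--18) leaves $\tilde y_1^n=[\tilde\by_1,\tilde\by_2]$ where, in the noiseless case, $\tilde\by_1=\tilde\by_2=\bu_{\text{R}}\widetilde{\textbf{G}}(s-1,r)$; thus each of $\tilde\by_1,\tilde\by_2$ is (a copy of) the same RM$(s-1,r)$ codeword, with erasures exactly inherited from $\by_1,\by_2$ respectively.

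\textbf{The main obstacle — the erasure budget for the right child.} The delicate step is line~19: picking $\tilde\by_l$ with fewer erasures and feeding it to the RM$(s-1,r)$ decoder, which can tolerate only $2^{(s-1)-r}-1=2^{s-r-1}-1$ erasures. Since $\tilde\by_1$ and $\tilde\by_2$ together cover at most $\rho\le 2^{s-r}-1$ erased positions, the one with fewer erasures has at most $\lfloor\rho/2\rfloor\le 2^{s-r-1}-1$ erasures (using that $\rho\le 2^{s-r}-1$ is odd-or-even, $\lfloor(2^{s-r}-1)/2\rfloor=2^{s-r-1}-1$), so the induction hypothesis applies and the second call recovers $\bu_{\text{R}}$ and $\bh_1''=\bu_{\text{R}}\widetilde{\textbf{G}}(s-1,r)$ exactly. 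I must be careful here that erasures in $\tilde\by_1$ are precisely those of $\by_1$ (the re-encoding subtraction in line~18 is exact because $\bh'$ is an exact re-encoding, so it neither creates nor removes erasures), and likewise for $\tilde\by_2$; this is why the halving argument is valid. Finally, combining $\bh'$ and $\bh''$ as in line~21 reconstructs $h_1^n=[\bu_{\text{L}}\widetilde{\textbf{G}}(s-1,r-1)\oplus\bu_{\text{R}}\widetilde{\textbf{G}}(s-1,r)\oplus\bu_{\text{R}}\widetilde{\textbf{G}}(s-1,r),\ \bu_{\text{R}}\widetilde{\textbf{G}}(s-1,r)]$, wait — more precisely one checks against \eqref{combinedmatrix4} that $h_1^n=\bu\widetilde{\textbf{G}}(s,r)=\bc$, and concatenating the recovered $\bu_{\text{L}},\bu_{\text{R}}$ gives $\bu$, hence $\bm=\bu_{\mathcal{A}}$. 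The recursion-index bookkeeping ($j,j_1',j'$) just tracks which coordinates of $\bu$ and $\mathcal{A}^c$ belong to which half and needs only a routine check that $\mathcal{A}_1^c,\mathcal{A}_2^c$ partition $\mathcal{A}^c$ consistently with the split of $\bu$ into $[\bu_{\text{L}},\bu_{\text{R}}]$, which holds by the index assignment fixed in Section~\ref{robustencoder}. This completes the induction and the claim.
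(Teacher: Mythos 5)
Your proof is correct and follows essentially the same route as the paper's: induction on the RM parameter $s$, a repetition-code base case at $r=0$, the observation that $\by_1\oplus\by_2$ is an RM$(s-1,r-1)$ codeword seen through at most $\rho\le 2^{s-r}-1$ erasures, and the halving argument $\lfloor\rho/2\rfloor\le 2^{s-r-1}-1$ for the better of the two copies passed to the right child. Your explicit remark that the re-encoding subtraction neither creates nor removes erasures is a point the paper leaves implicit, but the argument is otherwise the same.
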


\begin{proof}
Let the received sequence be denoted by $y_1^n$ which has at most $\rho$ erasures. Let also the key bits be denoted by $\bk$ which are assigned to entries of $\bu$ indexed by elements of $\mathcal{A}^c$. We use induction on the parameter $s$ of the underlying RM code of length $2^s$ to prove the claim. The induction hypothesis is that the decoder is successful for any RM-based coding scheme of length $2^s$ with some parameter $r \leq s$, and a key with size $\sum_{i=r+1}^s \binom{s}{i}$, assuming there are at most $\rho=2^{s-r}-1$ erasures. The induction base is $s=0$, for which the induction hypothesis is trivial. Now, suppose that the induction hypothesis holds for $s$ and we want to show it for $s+1$.

\textbf{Case 1:} $r=0$, i.e., we have an RM($s+1,0$) which becomes a repetition code of length $n = 2^{s+1}$. In this case, $\widetilde{\textbf{G}}(s+1,0)$ is the all-ones matrix and the entries of codeword are all equal to the sum of entries in $\bu$. Note that the number of message bits is $m = \sum_{i=0}^{r}\binom{s+1}{i} = 1$ and we have $2^{s+1}-1$ key bits. Also, the maximum number of erasures the code can correct is $2^{s+1}-1$. Hence, the decoder successfully retrieves the message bit using the non-erasure symbols, which there is at least one, in $y_1^n$. Suppose that the non-erasure bit is indexed by $i_1$. Since the locations of the key bits are known, we can place them at their respective locations retrieving $u_i$'s for all $i \in \mathcal{A}^c$. Next, the message bit located at $i'$ is retrieved as $u_{i'}=y_{i_1} \oplus_{i \in \mathcal{A}^c} u_i$, and the corresponding codeword is also retrieved correctly. Hence, the decoder is successful. Note that this case corresponds to lines 4-12 of Algorithm \ref{decoderrecursiveBEC}. 

\textbf{Case 2:} $r>0$. The code length is $n=2^{s+1}$ and the key length is $\sum_{i=r+1}^{s+1} \binom{s+1}{i}$. We split the key indices into two parts, namely $\mathcal{A}_1^c$ and $\mathcal{A}_2^c$, representing the key bits $\bk_1$ and $\bk_2$ in the first and the second half sub-blocks of $\bu$, respectively. The lengths of $\bk_1$ and $\bk_2$ are $|\mathcal{A}_1^c|=\sum_{i=r}^s \binom{s}{i}$ and $|\mathcal{A}_2^c|=\sum_{i=r+1}^s \binom{s}{i}$, respectively, due to the aforementioned choice of indices. The decoder first computes $\Bar{\by}=y_{1}^{n/2}\oplus y_{n/2+1}^n$ which will have at most $2^{s+1-r}-1$ erasures. It then passes this to the left child, in the binary tree representation terminology discussed earlier, along with $\bk_1$ and the set of its corresponding indices $\mathcal{A}_1^c$. The left child decodes a codeword of length $n'=2^s$ using a code with parameter $r'=r-1 \geq 0$, which can correct up to $2^{s-r'}-1=2^{s+1-r}-1$ erasures, and retrieves the message bits in $u_1^{n/2}$ given the key $\bk_1$ of length $\sum_{i=r'+1}^s \binom{s}{i}=\sum_{i=r}^s \binom{s}{i}$. The decoder on the left child is successful by induction hypothesis. It outputs $u_1^{n/2}$ and $\bh_1'$. After that, the decoder computes $\bh_2'= u_1^{n/2}\widetilde{\textbf{G}}(s,r)$ followed by $\bh'=[\bh_1' \oplus \bh_2', \bh_2']$. Then, the decoder computes $\Tilde{y}_1^{n}=y_1^{n} \oplus \bh'$ and chooses either $\Tilde{y}_1^{n/2}$ or $\Tilde{y}_{n/2+1}^{n}$, whichever has a smaller number of erasures, and passes it to the right child together with $\bk_2$ and the corresponding set of indices $\mathcal{A}_2^c$. The number of erasures in what is passed to this child is at most $2^{s+1-r}/2-1=2^{s-r}-1$, and the length of the key is $\sum_{i=r+1}^s \binom{s}{i}$. The decoder on this child decodes a codeword of length $n'=2^s$ using a code with parameter $r'=r > 0$, which can correct up to $2^{s-r'}-1=2^{s-r}-1$ erasures and retrieves the message bits in $u_{n/2+1}^{n}$ using the key $\bk_2$ of length $\sum_{i=r'+1}^s \binom{s}{i}=\sum_{i=r+1}^s \binom{s}{i}$. Decoding here is also successful by induction hypothesis. It outputs $u_{n/2+1}^n$ and $\bh_1''$. The overall decoder then computes $\bh''=[\bh_1'',\bh_1'']$ and outputs $h_1^n=\bh' \oplus \bh''$ and $u_1^n$. Hence, $u_1^n$ is retrieved and the proof is complete.
\end{proof}

\section{Conclusion}\label{conclusion}
In this work, we propose a model for threshold-secure coding with a shared key such that specific conditions for reliability and security based on information-theoretic measures are met. The specification of such model includes a threshold parameter which is to be designed based on the application for such coding schemes. Also, methods for utilizing error-correcting linear block codes in constructing threshold-secure coding schemes are discussed, where the parameter $t$ of the threshold-secure scheme is shown to be directly related to the minimum distance of the underlying linear block code. Furthermore, a coding scheme based on Reed-Muller codes is described. Its encoding is done recursively and is shown to satisfy the conditions for a \textit{proper} code. Moreover, a setup taking into account the noise in the communication channel between legitimate parties is considered. Then, a robust and threshold-secure coding scheme, based on code concatenation, is suggested for general channels. Also, a unified coding scheme built upon Reed-Muller codes for both threshold security and robustness in the presence of erasures is described.

A possible direction for future work is to design coding schemes based on punctured Reed-Muller codes to allow for more flexible rates. To this end, ideas from punctured schemes for closely related polar codes can be useful \cite{shin2013design,el2015harq}. Also, it is interesting to explore whether unified coding schemes for threshold security and robustness, similar to the RM-based scheme presented in Section\,\ref{robustlowcomplexity}, can be constructed from other well-known families of codes. Another possible direction of future work is to study threshold security in settings with wiretap channels, where the eavesdropper's channel is also noisy. Also, extending the considered setup to multi-user scenarios, as in wiretap multiple access \cite{tekin2008gaussian} or as in multi-user secret sharing setups \cite{soleymani2020distributed}, is another interesting future direction.

\bibliographystyle{IEEEtran}
\bibliography{IEEEabrv}

\end{document}